\def\BState{\State\hskip-\ALG@thistlm}
\newcommand{\mylabel}[2]{#2\def\@currentlabel{#2}\label{#1}}
 \DeclareMathAlphabet{\pazocal}{OMS}{zplm}{m}{n}
 \newcommand{\Fp}{\pazocal{F}} 
  \newcommand{\Eq}{\pazocal{E}} 
  \newcommand{\NE}{\pazocal{N}}
\definecolor{clemson-orange}{RGB}{234,106,32}
\definecolor{chicago-maroon}{RGB}{128,0,0}
\definecolor{northwestern-purple}{RGB}{82,0,99}
\definecolor{cornell-red}{RGB}{179,27,27}
\definecolor{sauder-green}{RGB}{171,180,0}
\definecolor{gray}{RGB}{192,192,192}
\definecolor{lawngreen}{RGB}{0,250,154}
\providecommand{\U}[1]{\protect\rule{.1in}{.1in}} \textheight 8.2in
\def\mG{\pazocal{G}}
\newtheorem{theorem}{Theorem}
\newtheorem{lemma}{Lemma}
\newtheorem{corollary}{Corollary}
\newtheorem{proposition}{Proposition}
\newtheorem{claim}{Claim}
\newtheorem{fact}{Fact}
\newenvironment{manualtheorem}[1]{%
  \manualtheoreminner
}{\endmanualtheoreminner}
\newenvironment{manualprop}[1]{%
  \manualpropinner
}{\endmanualpropinner}
\newenvironment{manualcorollary}[1]{%
  \manualcorollaryinner
}{\endmanualcorollaryinner}
\def\r{\succeq}
\theoremstyle{definition}
\newtheorem{example}{Example}
\newtheorem*{algorithm*}{Greedy Algorithm}
\definecolor{darkgreen}{rgb}{0.0, 0.5, 0.0}
\definecolor{brightpink}{rgb}{1.0, 0.0, 0.5}
\definecolor{brightgreen}{rgb}{0.4, 1.0, 0.0}
\definecolor{shockingpink}{rgb}{0.99, 0.06, 0.75}
\definecolor{persiangreen}{rgb}{0.0, 0.65, 0.58}
\def\bq{\begin{equation}}
\def\eq{\end{equation}}
\def\ba{\begin{eqnarray}}
\def\ea{\end{eqnarray}}
\def\bas{\begin{eqnarray*}}
\def\eas{\end{eqnarray*}}
\def\G{\Gamma}
\def\M {\mathcal M}
\def\x{\mathbf{x}}
\def\y{\mathbf{y}}
\def\M{\mathcal{M}}
\def\M{\mathcal{M}}
\def\U{\Upsilon}
\def\bh{\overline b}
\def\bl{\underline b}
\def\sh{\overline {\sigma}}
\def\sl{\underline {\sigma}}
\def\midcup{\mathsmaller{\bigcup}}
\def\mathcenterto#1#2{\mathclap{\phantom{#1}\mathclap{#2}}\phantom{#1}}
\let\old@widetilde\widetilde
\def\widetildeto#1#2{\mathcenterto{#2}{\old@widetilde{\mathcenterto{#1}{#2\,}}}}
\let\old@widehat\widehat
\def\widehatto#1#2{\mathcenterto{#2}{\old@widehat{\mathcenterto{#1}{#2\,}}}}
\def\join{\curlyvee}
\def\meet{\curlywedge}
\newcommand{\Meet}{\mathop{\vcenter{\hbox{\scalebox{1.2}[1.7]{$\curlywedge$}}}}}
\renewcommand{\Join}{\mathop{\vcenter{\hbox{\scalebox{1.2}[1.7]{$\curlyvee$}}}}}
\definecolor{clemson-orange}{RGB}{234,106,32}
\definecolor{chicago-maroon}{RGB}{128,0,0}
\definecolor{northwestern-purple}{RGB}{82,0,99}
\definecolor{cornell-red}{RGB}{179,27,27}
\definecolor{sauder-green}{RGB}{171,180,0}
\definecolor{gray}{RGB}{192,192,192}
\definecolor{lawngreen}{RGB}{0,250,154}
\newcommand{\jw}[1]{\textcolor{northwestern-purple}{{[JW: #1]}}}
\newcommand{\fk}[1]{\textcolor{chicago-maroon}{{[Fuhito: #1]}}}
\begin{document}

\title{\bf Monotone Comparative Statics without Lattices}

\author{\textsc{Yeon-Koo Che},\thinspace\thinspace\ \textsc{Jinwoo Kim},\thinspace\thinspace\ \textsc{Fuhito Kojima}\thanks{Che: Department of Economics, Columbia University (email: yeonkooche@gmail.com); Kim: Department of Economics, HKUST (email: jikim72@gmail.com); Kojima: Department of Economics and Market Design Center, the University of Tokyo (email: fuhitokojima1979@gmail.com). The authors are grateful to the editor and the referees for their thoughtful comments and constructive suggestions, which have greatly improved the paper. We are also grateful to Gregorio Curello, Piotr Dworczak,  Drew Fudenberg, Haoyu Liu, Paul Milgrom, Stephen Morris, Xiaosheng Mu, John Quah, Chris Shannon, Ludvig Sinander, Joel Sobel, and Bruno Strulovici, and participants of numerous seminars, for their comments. We thank Yutaro Akita, Nanami Aoi, C\'esar Barilla, Masato Eguchi, William Grimme, Taiyo Inamura, Takuto Kaneko, Haruki Kono, Tianhao Liu, Shinpei Noguchi, Ryo Shirakawa, Tomoya Takei, Kenji Utagawa, and Yutong Zhang for excellent research assistance. Yeon-Koo Che is supported by National Science Foundation Grant SES-1851821, and Fuhito Kojima is supported by the JSPS KAKENHI Grant-In-Aid 21H04979 and JST ERATO Grant Number JPMJER2301, Japan.  This work was supported by the Ministry of Education of the Republic of Korea and the National Research Foundation of Korea (NRF-2024S1A5A2A03038509).}}
\maketitle

\begin{abstract} The theory of Monotone Comparative Statics (MCS) has traditionally required a lattice structure,  excluding certain multi-dimensional environments like mixed-strategy games where this property fails. We show this structure is not essential. We introduce a weaker notion, pseudo lattice property, and preserve the theory's core results by generalizing the MCS theorems for individual choice and Tarski’s fixed-point theorem.  Our framework expands comparative statics to pseudo quasi-supermodular games. Crucially, it enables the first MCS analysis of mixed strategy Nash equilibria and (trembling-hand) perfect equilibria.

	\medskip
	
	\noindent \emph{JEL Classification Numbers}:  C61, C72, D47.\\
	\noindent \emph{Keywords}:  monotone comparative statics,  pseudo lattice,  fixed-point theorem,  pseudo quasi-supermodular games, mixed-strategy Nash equilibria, perfect equilibria.
\end{abstract}


\section{Introduction}

Comparative statics analysis is a cornerstone of economic inquiry. It concerns how economic behavior—whether individual choices, collective outcomes, or game-theoretic equilibria—responds to changes in the underlying environment. As such, comparative statics provides the principal means by which economic models generate falsifiable predictions, forming a critical link between theory and empirical work. 

The modern methodological framework for this analysis is {\bf Monotone Comparative Statics (MCS)}. Pioneered by \cite{topkis:79, topkis:98}, \cite{milgrom/roberts:90}, and \cite{milgrom/shannon:94}, it is built upon a powerful order-theoretic foundation. The universal requirement of this framework is that the underlying domain of choice must be a {\bf lattice}---a partially ordered set where any two elements have a  least upper bound and  greatest lower bound.   This structure is essential for defining the notions of complementarity and supermodularity that are central to the theory, both with respect to parameters and among the dimensions of a choice variable.

The reliance on the lattice property, however, raises critical conceptual and practical questions. Conceptually, is this rigid mathematical structure truly necessary for monotone comparative statics? Practically, many important economic environments fail to satisfy the lattice property. A leading example is the domain of stochastic decisions.  The set of lotteries $\Delta(X)$ over a choice space, $X$, fails to form a lattice under standard stochastic orders, even when $X$ is itself a lattice. This limitation has severely constrained the application of MCS to settings involving risk, uncertainty, or strategic randomization.

This paper demonstrates that the analytical power of monotone comparative statics can be preserved without the rigid lattice requirement. We show that the lattice property can be replaced by a much weaker condition, a {\bf pseudo lattice}, which merely requires the existence of, possibly non-unique, {\it minimal} upper bounds and {\it maximal} lower bounds.  This property is remarkably general; for instance, any finite or compact set with largest and smallest elements is a {\bf complete pseudo lattice}.\footnote{While a formal definition is provided later, this is a mild requirement: for instance, any compact set that contains the largest and smallest elements is a complete pseudo lattice.} This relaxation  significantly expands the scope of MCS, allowing us to analyze environments previously beyond its reach, including mixed strategy Nash equilibria and, most strikingly, (trembling-hand) perfect equilibria.

Our analysis begins by rebuilding the theory for the canonical individual choice problem, substituting the standard lattice assumption with our weaker pseudo lattice structure. This section delivers two foundational results that parallel the cornerstones of the classic theory. First, we provide a full characterization—a necessary and sufficient condition—for the set of optimal choices by an individual to be monotone with respect to changes in the economic environment.  This result directly generalizes the celebrated Monotonicity Theorem of \cite{milgrom/shannon:94},  demonstrating that the core comparative statics conclusions can be obtained with virtually no loss of analytical power, with appropriately generalized ordinal conditions.  Second, we show that the set of maximizers inherits the complete pseudo lattice structure of the domain. This finding is analogous to the well-known result of \cite{milgrom/roberts:90} that the optimizer set forms a complete sublattice in the standard framework, a property that is often useful for equilibrium analysis. 

The analysis of equilibrium in strategic environments requires a tool for establishing existence and characterizing the structure of the solution set. To this end, we develop a  fixed-point theorem that serves as the analytical engine for our paper. This result generalizes the celebrated theorems by \cite{tarski:55} and \cite{zhou:94} by replacing the restrictive assumption of a complete lattice with that of a complete pseudo lattice. The theorem establishes that any monotonic correspondence from a complete pseudo lattice into itself has a nonempty set of fixed points that itself forms a complete pseudo lattice, thus admitting largest and smallest elements. We accompany this existence result with a comparative statics theorem (\Cref{thm:mcs of fp}) which shows that as the correspondence shifts upwards, the set of fixed points also shifts upwards in the weak-set order.

 With the results for individual choice and the fixed-point theorem in hand, we turn to the analysis of Nash equilibria in games with strategic complementarities. We apply our conditions to player payoffs to identify a broad class of pseudo quasi-supermodular games. In these games, the results from our individual choice analysis ensure that each player’s best-response correspondence is monotonic in the manner required by our generalized fixed-point theorem. Applying the theorem to the joint best-response correspondence immediately establishes our main equilibrium results: the set of pure-strategy Nash equilibria is nonempty and forms a complete pseudo lattice. Furthermore, when an exogenous parameter shifts the game in a way that strengthens each player’s incentive to take higher actions, the entire set of Nash equilibria shifts upward monotonically.

The power and scope of this generalized framework are best illustrated through its applications. We first analyze a generalized Bertrand game with product substitutes, showing that our machinery accommodates a wider range of competitive environments than the existing literature. Notably, our approach is capable of handling the pure Bertrand game, whose discontinuous payoff functions prevent it from being analyzed by standard supermodularity methods like those in \cite{milgrom/shannon:94}.  Second, and more fundamentally, we address the long-standing challenge of applying MCS to mixed strategy Nash equilibria. The domain of mixed strategies is a canonical example of a non-lattice space, making traditional methods inapplicable. We show that the set of mixed strategy equilibria is bounded by the extremal pure-strategy equilibria and therefore inherits their monotone comparative statics properties. While this result is implied by earlier work under the assumption of payoff continuity, our approach provides a more direct proof and, crucially, establishes the result without requiring payoff continuity. 

The ability of our framework to handle non-lattice domains, particularly the space of mixed strategies, culminates in our paper's most novel contribution: the first general monotone comparative statics analysis of (trembling-hand) perfect equilibria. Our analysis applies to a broad class of supermodular games. We extend \cite{selten:75}'s classic concept to potentially infinite games by defining a perfect equilibrium as the limit of Nash equilibria from a sequence of perturbed games. In each perturbed game, players are constrained to play from a parameterized set of full-support mixed strategies, ensuring that each open set of pure strategies  is played with some positive minimal  probability. 

Our proof strategy is to first establish existence and comparative statics properties for the Nash equilibria along the sequence of these perturbed games, and then to show that these properties are preserved in the limit. Specifically, we show that each perturbed game possesses extremal ``constrained-pure" Nash equilibria that bound the entire equilibrium set and shift monotonically as the environment changes. The limit of these extremal equilibria yields the existence of perfect equilibria in pure strategies and, crucially, ensures that the set of perfect equilibria inherits the monotone comparative statics property of equilibria of the perturbed games. This entire line of argument would be impossible with standard methods. The space of  mixed strategies in each perturbed game  forms a pseudo lattice but not a lattice, rendering traditional fixed-point theorems inapplicable. Thus, our generalized fixed-point theorem is the essential tool that enables the analysis at each step of the sequence.

\paragraph{Related Literature.} This paper contributes to the large and influential literature on the general methodology of monotone comparative statics. The workhorse methods in modern economic analysis were developed and refined in foundational contributions by   \cite{topkis:79, topkis:98}, \cite{vives:90}, \cite{milgrom/roberts:90}, \citet{milgrom/shannon:94}, and \citet{quah/strulovici:09}. A unifying feature of this entire body of work is its reliance on the mathematical structure of a lattice for the domain of choice, which provides the foundation for defining complementarities and comparing sets of optima. Our primary methodological contribution is to show that this structural assumption can be substantially weakened.
We demonstrate that the core analytical power of the theory is preserved when we dispense with the lattice property in favor of the more general pseudo lattice.

Several papers analyze  monotone comparative statics and fixed points of monotone operators in non-lattice environments.  \cite{quah2007comparative} considers individual choice problems in which a constraint set lacks a lattice structure.  Similarly, \cite{abian1961theorem}, \cite{smithson:71}, and \cite{li:14} establish fixed-point existence under monotonicity without lattice assumptions. Compared with these papers, we require a more structure on the domain but obtain stronger results, including the extremal predictions. More detailed comments will follow.

  The remainder of the paper is organized as follows. \Cref{sec:prel} introduces the preliminary mathematical concepts central to our analysis, including the formal definition of a pseudo lattice and the associated set orders. \Cref{sec:ind-choice} develops the monotone comparative statics results for the individual choice problem in this new domain. \Cref{sec:FP} presents
  our generalized fixed-point theorem to establish the existence, structure, and comparative statics of (pure and mixed) Nash equilibria. \Cref{sec:wmcs of thpe} extends the analysis to perfect equilibria. Finally, Section 6 offers concluding remarks. All proofs omitted from the main text are provided in the Appendix  and a Supplementary Appendix.

\section{Preliminaries} \label{sec:prel}

This section introduces a set of notions and terminologies and  establishes a number of preliminary results  that will be used  throughout the paper.  Our theory weakens the structural properties of the domain (e.g., choice or strategy sets for players) as well as the set order.    

\subsection{The structural properties of domain.}  Throughout, the choice domain $X$ is assumed to be  a  \emph{partially ordered set} with regard to a \emph{primitive partial order} $\ge$, namely a binary relation that is \emph{reflexive, transitive} and \emph{anti-symmetric} on $X$. For any set $S$, let $U_S := \{ x \in X : x \ge x',\forall x' \in S  \}$ and $L_S := \{ x \in X :  x \le x', \forall x' \in S\}$, that is, upper and lower contour sets of $S$, respectively. When $S= \{x\}$, we will simply write $U_x$ and $L_x$.  We assume that \(X\) is endowed with a topology under which the order \(\ge\) is \emph{closed}, meaning that the set $\{(x,y)\in X\times X : x \ge y\}$ is closed in the associated product topology.\footnote{This assumption is required to ensure that the induced stochastic order \(\ge^{sd}\) on \(\Delta(X)\) is a partial order---particularly, that it satisfies antisymmetry---when we analyze mixed-strategy Nash equilibria and perfect equilibria. In other parts of the analysis, specifically \Cref{lem:compact-chaincomp} and \Cref{cor:compact-with-ext}, it suffices to assume that \(\ge\) is \emph{semi-closed}, in the sense that   \(U_x\) and \(L_x\) are closed for each \(x\in X\).} Also, throughout the paper, we endow any space of probability measures with the weak topology.

Existing literature imposes additional order properties.  $X$ is a \textit{lattice} if for any $x,x'\in X$, $x\vee x'\in X$ and $x\wedge x' \in X$, where $x\vee x'  := \inf U_{\{x,x'\}}$ is their {\it join}, or the least (common) upper bound, of $\{x,x'\}$ and $x\wedge x':= \sup L_{\{x,x'\}}$ is their {\it meet}, or the greatest (common) lower bound, of $\{x,x'\} $.  (We will write $\vee_S$ and $\wedge_S$ when the sup or the inf is taken over a set $S \ne X$.)  $X$ is a  \emph{complete lattice} if,  for any $S\subset X$,  $\inf U_S \in X$ and   $\sup L_S \in X$, that is, its supremum and infimum exist in $X$.   A subset $S\subset X$ is a \emph{sublattice} of $X$  if, for any $x,x'\in S$, $x\vee x'\in S$ and $x\wedge x'\in S$.      A subset $S\subset X$ is a \emph{complete sublattice} of $X$ if $\inf U_{S'} \in S$ and   $\sup L_{S'} \in S$ for all $S' \subseteq S$.\footnote{Some other terminologies are used for the same notion: \cite{topkis:98} uses subcomplete sublattice and \cite{zhou:94} uses closed sublattice.  In particular, the ``closedness'' of \cite{zhou:94} should not be confused with the topological ``closedness'' used in this paper.}  

Throughout, we require much weaker structural properties for the partial order $(X,\ge)$. For any $x,x' \in X,$ let  $x \join x':= \{ y \in U_{\{x,x'\}}:z \in U_{\{x,x'\}} \Rightarrow y \not> z \}$  be their \textbf{pseudo join}---the set of minimal upper bounds of $\{x,x'\}$ in $X$---and let $x \meet x' := \{ y \in L_{\{x,x'\}}:  z \in L_{\{x,x' \}} \Rightarrow y \not< z \}$ be their \textbf{pseudo meet}---the set of maximal lower bounds of $x$ and $x'$ in $X$. If $X$ is a lattice, $x \join y$ and $x \meet y$ reduce to singleton sets $\{x \vee y\}$ and $\{x \wedge y\}$, respectively. In that sense, $\join$ and $\meet$ are generalizations of the usual join and meet operations to the non-lattice sets. We consider a non-lattice set where $\join$ and $\meet$ are well-defined.

We say that $X$ is a \textbf{pseudo lattice} if $x \join y$ and $x \meet y$ are nonempty for every $x,y \in X$.  We also say  $X$ is a \textbf{complete pseudo lattice} if it is  chain complete and, for all $S\subset X$, both 
 $\Join_X S:=\{ z\in U_S: x\in U_S \Rightarrow  x\not < z\}$ and  $\Meet_X S:=\{ z\in L_S: x\in L_S \Rightarrow  x\not > z\}$ are nonempty.\footnote{A partially ordered $X$ is \emph{chain complete} if every chain  $C \subset X$ has a supremum  and infimum in $X$. (This notion is sometimes called ``chain complete in both directions.'')  Note that a chain is a totally ordered subset $C\subset X$; that is, for any $x,y \in C$, either 
$x \le y $ or $y\le x $.} 
  A subset $S\subset X$ is a \textbf{weak pseudo sublattice} of $X$  if, for any $x,x'\in S$, $(x\join x')\cap S$ and $(x\meet x')\cap S$ are both nonempty, and a {\bf pseudo sublattice} if, for any $x,x'\in S$, $x\join x' \subset S$ and $x\meet x' \subset S$.   Clearly, if $S$ is a  pseudo sublattice of $X$, then it is a weak pseudo sublattice of $X$; but the converse need not hold.
   A subset $S$ of $X$ is a \textbf{complete pseudo sublattice} if it is chain complete and, for every nonempty $S' \subseteq S$, $\Join_X S'$ and $\Meet_X S'$ are nonempty subsets of $S$.

A pseudo lattice and a complete pseudo lattice are considerably weaker than a lattice and a complete lattice. As we will see later (\Cref{cor:compact-with-ext}), any compact set is a complete pseudo lattice if (and only if) it contains the largest and smallest elements.  We provide  several examples of $X$ that is a pseudo lattice but not a lattice. 

\begin{example}\label{example2} Every finite set $X$ is compact, and hence it becomes a complete pseudo lattice whenever it contains both the largest and smallest elements.  To see such a set need not be a lattice,  consider $X=\{ (0,0), (1,0), (0,1), (2,1), (1,2), (3,3)    \}$.
\begin{figure}[h] \label{fig:finite pseudo lattice}
	\centering
    \begin{tikzpicture}[scale=1.2]

  \draw[->, line width=1.2pt] (-0.5,0) -- (3.5,0);
  \draw[->,line width=1.2pt] (0,-0.5) -- (0,3.5);

  \foreach \x in {0,1,2,3}
    \draw[gray!80, dotted, line width=0.8pt] (\x,-0.5) -- (\x,3.5);
  \foreach \y in {0,1,2,3}
    \draw[gray!80, dotted, line width=0.8pt] (-0.5,\y) -- (3.5,\y);

  \foreach \point in {(0,0),  (1,0),(0,1), (2,1), (1,2), (3,3)} {
    \fill \point circle (2pt);
  }

  \node[below left] at (0,0) {$(0,0)$};
  \node[below] at (1,0) {$x=(1,0)$};
  \node[left] at (0,1) {$x'=(0,1)$};
  \node[below right] at (2,1) {$(2,1)$};
  \node[above left] at (1,2) {$(1,2)$};
  \node[above right] at (3,3) {$(3,3)$};

\end{tikzpicture}\caption{A complete pseudo lattice that is not a lattice.} \end{figure} 

This set has  the largest and smallest elements, $(3,3)$ and $(0,0)$, and thus forms a complete pseudo lattice. However, it is not a lattice: for example, for $x =(1,0)$ and $x' = (0,1)$,
$x \join x' =\{(2,1), (1,2)\}$. 
Since this set is non-singleton,   $(1,0)\vee (0,1)$ is not well-defined, so $X$ fails to be a lattice.
\end{example}

\begin{example}[\textbf{Stochastic dominance order}]  \rm  \label{KKO example}
Consider the set $X=\Delta(S)$ of all Borel probability measures on a compact, partially ordered Polish space $S$ containing largest and smallest elements $\overline s$ and  $\underline s$. We endow $X$ with the (first-order) stochastic dominance order, $\ge^{sd}$: $x \ge^{sd} y$ if $\int f  dx \ge \int f dy$ for all bounded nondecreasing  function $f: S \to \mathbb{R}$.\footnote{This is equivalent to requiring that $x (S') \ge y (S')$ for every upward closed set $S' \subset S$, where $S'$ is upward closed if $s \in S'$ and $s' \ge s$ imply $s' \in S'$. That $\ge^{sd}$ is a partial order is shown by \citet{kamae1978stochastic}.} If $S$ contains largest and smallest elements ($\overline{s}$ and $\underline{s}$), then $X$ is a complete pseudo lattice by \Cref{cor:compact-with-ext}, as it is compact and has largest ($\delta_{\overline{s}}$) and smallest ($\delta_{\underline{s}}$) elements (where $\delta_s$ is the Dirac measure at $s$).

However, $X$ is not a lattice in general, particularly if $S$ is multidimensional, as an example from \cite{kamae1977stochastic} shows. 
Let $S=\{0,1\}\times\{0,1\}$, as depicted below.
\begin{figure}[h]
	\centering
	\begin{tikzpicture}[scale=3.5]
		\fill (0,0) circle (1pt) node[below left=1pt] {$\frac{1}{2}$};
		\fill (1,0) circle (1pt) node[below right =2pt] {$\frac{1}{2}$};
		\fill (0,1) circle (1pt) node[above left =1pt] {$\frac{1}{2}$};
		\fill (1,1) circle (1pt) node[above right=2pt] {$\frac{1}{2}$};
		
		\draw[blue, thick,rounded corners, rotate around={45:(0.5,0.5)}] (-0.5,0.3) rectangle ++(2.0,0.4);
		
		\draw[red, thick,rounded corners, rotate around={135:(0.5,0.5)}] (-0.5,0.3) rectangle ++(2.0,0.4);
		
		\draw[thick, rounded corners] (-0.25,-0.27) rectangle ++(1.53,0.4);
		
		\draw[thick, rounded corners] (-0.25,-0.27) rectangle ++(0.4,1.53);
		
	\end{tikzpicture}
	\caption{Failure of the lattice property} \label{fig:KKO}
\end{figure}

To see why $X=\Delta(S)$ fails to be a lattice, consider the two lotteries $x=\frac{1}{2}\delta_{(1,1)}+\frac{1}{2}\delta_{(0,0)}$ and $x^{\prime}=\frac{1}{2}\delta_{(1,0)}+\frac{1}{2}\delta_{(0,1)}$. Both lotteries $\underline{a} = \frac{1}{2}\delta_{(0,0)}+\frac{1}{2}\delta_{(1,0)}$ and $\underline{b} = \frac{1}{2}\delta_{(0,0)}+\frac{1}{2}\delta_{(0,1)}$ are maximal lower bounds of $x$ and $x'$. Since  $x \meet x'$ is a non-singleton set, $X$ fails to be a lattice. This space is a canonical example for our later analysis of mixed strategies.
\end{example}

\begin{example}[\textbf{Mean-preserving spread/convex order}]
Another important example is the set of information structures (distributions of posteriors) ordered by the mean-preserving spread, or convex, order ($\ge^{cx}$). Consider a compact, convex set $\Theta \subset \mathbb{R}^d$. For a fixed prior $\mu \in \Delta(\Theta)$, the set $X_\mu$ of all  distributions $x \in \Delta (\Delta (\Theta))$  over posterior beliefs with mean $\mu$ constitutes a feasible set.\footnote{The order  $\ge^{cx}$  is  defined as: $x \ge^{cx}y$  if $\int f dx \ge \int f dy$ for all continuous, convex function $f: \Delta(\Theta) \to \mathbb{R}$. This defines  a partial order. In particular, Corollary 3.24 of \citet{elton1992fusions} ensures the antisymmetry of $\ge^{cx}$ since the set is a compact metrizable convex subset of a locally convex topological vector space.} Note that this is a compact set and contains a largest and a smallest element,  $\overline x_\mu$ and $\underline x_\mu$:   $\underline{x}_\mu$  represents the no-information experiment; $\overline{x}_{\mu}$ represents the experiment that fully reveals every $\theta \in \Theta$.
Hence, by \Cref{cor:compact-with-ext}, $X_{\mu}$ is a complete pseudo lattice.\footnote{Often, we restrict attention to the distribution over posterior \emph{means} rather than posterior \emph{beliefs}.  The feasible set of distributions of the posterior mean then becomes the convex-order interval 
$\{\, \nu \in \Delta(\mathbb{R}^d) : \delta_{\bar{m}} \le^{cx} \nu \le^{cx} \mu \,\}$, 
where $\bar{m} = \int_\Theta \theta\, d\mu(\theta)$. This set is also a complete pseudo lattice.} However,  $X_{\mu}$ generally fails to be a lattice as shown in \Cref{exa:failure_lattice} of the Online Appendix.
\end{example}

The next result characterizes a complete pseudo lattice \(X\) in terms of the existence of its extremal elements.  
In particular, \(X\) is a complete pseudo lattice whenever it is compact (or, more generally, chain complete) and possesses the largest and smallest elements.  

\begin{theorem}\label{thm:pseudo-lattice-chain-complete}
A chain complete set \(X\) is a complete pseudo lattice if and only if it admits both the largest and smallest elements.
\end{theorem}

Chain completeness is a mild condition.
It is automatically satisfied under compactness, which is often assumed for other purposes.

\begin{lemma}\label{lem:compact-chaincomp}
If \(X\) is compact, then it is chain complete.\footnote{The converse does not hold.   For example, \(X=[0,1)\cup\{2\}\) is chain complete but not compact.} 
\end{lemma}

\begin{corollary}\label{cor:compact-with-ext}
A compact set $X$  is a complete pseudo lattice  if and only if  it admits both a largest and a smallest element.
\end{corollary}




\subsection{Set orders} 

Consider a nonempty pseudo lattice $(X,\ge)$.  One can define several set orders induced by $\ge$, and two of them are of particular interest to us: \emph{pseudo strong-set (pSS) order} $\ge_{pss}$ and \emph{weak-set (WS) order} $\ge_{ws}$.

We say $S'\subset X$ {\bf pSS dominates} $S\subset X$, and write $S'\ge_{pss} S$, if $\forall x\in S, \forall x'\in S'$, $x\join x'\subset S'$ and   $x\meet x'\subset S$.  $S'\subset X$ {\bf weak pSS dominates} $S\subset X$, and write $S'\ge_{wpss} S$, if $\forall x\in S, \forall x'\in S'$, $(x\join x')\cap S'$ and $(x\meet x')\cap S$ are both nonempty. We say  $S'$ {\bf weak-set dominates} $S$, and write $S'\ge_{ws} S$, if, for each $x\in S$, there exists $x'\in S'$ such that $x'\ge x$; and for each $x'\in S'$, there exists $x\in S$ such that $x\le x'$.

As  is easily seen, $S' \ge_{pss} S$ implies $S'\ge_{ws} S$.  The following result further clarifies  the relationship between these two orders by decomposing pseudo strong-set order  into   weak-set order and a couple of ``extra properties'' when the choice domain is a pseudo lattice (and the compared sets are pseudo sublattices):\footnote{One can easily construct examples showing that each property is  indispensable for this characterization.}  
\begin{theorem} \label{thm:ss-char}
 Consider  a pseudo lattice $X$ and its subsets $S$ and $S'$. Then,  $S' \ge_{pss} S $  if (i)  $S' \ge_{ws}  S$; (ii) $S \cup S'$ is a pseudo sublattice;  and  (iii) (sandwich property)  for  any $x \in S$  and $y, z \in S'$  (resp., any $x \in S'$ and $y, z \in S$),  $x \in   [y,z]$ implies $x \in S'$ (resp., $x \in S$). 
  Conversely, if $S$ and $S'$ are nonempty pseudo sublattices, then   $S' \ge_{pss} S$ implies  the properties   (i)--(iii).  
\end{theorem}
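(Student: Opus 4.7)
The plan is to attack the two implications separately, treating the sufficient direction (the more delicate one) first and then verifying the necessary direction by a short case analysis.

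For the sufficient direction, I would fix arbitrary $x\in S$ and $x'\in S'$ and try to place $x\vee x'$ in $S'$ and $x\wedge x'$ in $S$. Using hypothesis (ii), I know both $x\vee x'$ and $x\wedge x'$ lie in $S\cup S'$, so each is in at least one of the two sets. If $x\vee x'\in S'$ outright, we are done with that part. The interesting case is when $x\vee x'\in S\setminus S'$: invoking upper weak set dominance from (i), I can select $y'\in S'$ with $y'\ge x\vee x'$, and then I have $x\vee x'\in S$ sandwiched between $x'\in S'$ and $y'\in S'$, so the sandwich property (iii) forces $x\vee x'\in S'$. A mirror argument using lower weak set dominance and the dual half of the sandwich property handles $x\wedge x'\in S$. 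The main thing to watch is making sure the two halves of the sandwich property are each invoked with the right role for $S$ vs.\ $S'$; this is the only real bookkeeping obstacle in the argument.

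For the necessary direction, assume $S,S'$ are nonempty sublattices with $S'\ge_{ss}S$. Property (i) is immediate: given $x\in S$, pick any $x'\in S'$ (nonempty) and note $x\vee x'\in S'$ with $x\vee x'\ge x$, proving upper weak set dominance; lower weak set dominance is symmetric via $x\wedge x'\in S$. For (ii), take any $a,b\in S\cup S'$ and do a four-way case split on which sides they lie: the two ``same side'' cases are handled by each set being a sublattice, and the two ``mixed'' cases are exactly the content of $S'\ge_{ss}S$, which places $a\vee b\in S'$ and $a\wedge b\in S$, both in $S\cup S'$.

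Property (iii) is where the strong set order is used most directly. Given $x\in S$ and $y,z\in S'$ with $y\le x\le z$, consider the pair $x\in S$, $y\in S'$: strong set order gives $x\vee y\in S'$, but $y\le x$ forces $x\vee y=x$, hence $x\in S'$. The symmetric half (with $x\in S'$ sandwiched by $y,z\in S$) follows by pairing $x$ with $z$ and noting $x\wedge z=x\in S$. Together these establish the three properties, completing the characterization.
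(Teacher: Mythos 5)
Your proposal is correct and follows essentially the same route as the paper's proof: for sufficiency you use (ii) to place $x\vee x'$ in $S\cup S'$, then weak set dominance to produce the upper sandwich point and (iii) to conclude, exactly as the paper does (phrased there as a contradiction rather than a case split); for necessity your identities $x\vee y=x$ and $x\wedge z=x$ are the paper's argument verbatim. No gaps.
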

\begin{proof}
To prove the first statement, let us consider any $x \in S$ and $x' \in S'$.   To show $x \join x' \subset S'$, suppose not. Then, there exists  $\hat  x\in x \join x'$ such that \jw{$\hat x \not\in S'$} and $\hat x\in S$ by (ii). By (i), there exists $z \in S'$ such that $z  \ge \hat  x$. So we have $x' \le \hat x \le z$ while $x' ,z \in S'$ and $\hat x \in S$. Thus, by (iii), $\hat  x \in S'$, a contradiction.  The proof of $x \meet x' \subset S$ is analogous and  omitted. 

Suppose now that $S' \ge_{pss} S$ where $S$ and $S'$ are nonempty pseudo sublattices. 
Clearly, (i) holds. To see that (ii)  holds, consider any $x, x' \in S \cup S'$. If either $x, x' \in S$ or $x,x' \in S'$, then  clearly $x \join x'$ and $x\meet x'$ are subsets of $S \cup S'$ since $S$ and $S'$ are  pseudo sublattices. If $x \in S$ and $x' \in S'$, then $S' \ge_{pss} S$ implies that both $x \join x' $ and $x \meet x'$ are subsets of  $S \cup S'$.  To verify (iii), observe that for any $x \in S$ and $y,z \in S'$ with  $  x  \in [y, z]$,  we have  $\{x\} =   x \join y  $ and thus $x \in S'$ since $S' \ge_{pss} S$. Also,  for any $x \in S'$ and $y,z \in S$ with  $ x \in [y,z]$,  we have  $\{x\} =   x \meet z $ and thus $x \in S$.    
\end{proof}

This characterization clarifies precisely what is “lost’’ when we use the weak-set order instead of the (pseudo) strong-set order---namely, properties (ii) and (iii).


\section{Individual Choices}\label{sec:ind-choice}

Consider an individual who chooses an action $x$ from a feasible set $S \subset X$, facing a parameter $t \in T$, by maximizing an objective function $u : X \times T \to \mathbb{R}$, where $X$ is a pseudo lattice and $T$ is a partially ordered set.  We are interested in how the set of optimal choices,
$$M_{S} (t) := \arg\max_{x \in  S} u (x,t), $$
 responds to changes in  the {\it environment} from $(t,S)$ to $(t',S')$. We write $M(t)$ when $S=X$ and $M_S$ when $t$ is fixed. The next subsection provides conditions that characterize the monotone comparative statics for the set $M_S(t)$ in the weak pseudo strong-set order.

\subsection{MCS characterization of individual choices}

We begin with some definitions. First, we say that $f: X\to \mathbb{R}$ is  \textbf{pseudo quasi-supermodular} if, for any $x, y\in X$, $\overline x\in x\join y$, and $\underline x\in x\meet y$,
\begin{align*}
    f(x)-f(\underline x) \ge\mathrel{(>)}0 \Rightarrow f(\overline x)-f(y)\ge\mathrel{(>)}0, \tag{pQSUP} \label{pqsup}
\end{align*}
 and  \textbf{pseudo  supermodular} if, for any $x, y\in X$, $\overline x\in x\join y$, and $\underline x\in x\meet y$,
\begin{align*}
  f(x)-f(\underline x)\le f(\overline x)-f(y). \tag{pSUP} \label{psup}
\end{align*}
As is well-known, the prefix ``quasi'' weakens the notion from a cardinal concept to an ordinal one.  More importantly  for our purposes, ``pseudo'' extends the property to a pseudo lattice.  If $X$ is a lattice (i.e., not just a pseudo lattice), then  pseudo quasi-supermodularity reduces to {\it quasi-supermodularity}, and pseudo supermodularity reduces to {\it supermodularity}, well known in the literature.  These conditions capture the sense of complementarities across multiple dimensions of action under the payoff function $f$.

Next, let us consider a parametrized family of functions $u: X \times T  \to \mathbb R$.  
 We say that $u(x,t)$ satisfies \emph{single-crossing  in $(x,t)$} if, for each $x,x' \in X$ with $x' \ge x$ and $t, t' \in T$ with $t' \ge t$, 
\begin{align*}
    u(x',t)-u(x,t)\ge\mathrel{(>)}0 \Rightarrow u(x',t')-u(x,t') \ge\mathrel{(>)}0, \tag{SC} \label{SC}
\end{align*}
 and \emph{increasing differences  in $(x,t)$}  if, for each $x,x' \in X$ with $x' \ge x$ and $t, t' \in T$ with $t' \ge t$, 
\begin{align*}
    u(x',t)-u(x,t) \le u(x',t')-u(x,t'). \tag{ID} \label{ID}
\end{align*}
These conditions capture the idea that an individual faces a stronger incentive to raise her action under $u(\cdot,t')$ than under $u(\cdot,t)$.  As is clear, increasing differences imply single-crossing. Also, we say that $u$ is pseudo (quasi-)supermodular in $x$ if $u(x, t)$ is pseudo (quasi-)supermodular as a function of $x$ for each $t \in T$. 

 \cite{milgrom/shannon:94} characterize monotone comparative statics of the optima in the strong-set order for lattice domains with quasi-supermodular and single-crossing payoffs.  We extend the characterization when $X$ is (only) a pseudo lattice.\footnote{\cite{quah2007comparative} considers the individual choice problem in which the constraint set (defined in a lattice) lacks a sublattice property and identifies conditions under which optimal choices rise in weak-set order when a constraint set also rises in weak-set order. The current theorem strengthens both the hypothesis (a monotonic pSS shift of the constraint set) and the conclusion (a monotonic pSS shift of the optimal choices), and hence complements \cite{quah2007comparative}.}  



\begin{theorem} [Characterization] \label{thm:MCS-PSS}  $M_{S'} (t') \ge_{wpss} M_{S} (t)$ for all $t' \ge t$ and   $S'\ge_{wpss} S$ if and only if $u$ is pseudo quasi-supermodular in $x$ and satisfies single-crossing in $(x,t)$.
\end{theorem}

This result reduces to the aforementioned result by \citet{milgrom/shannon:94}, their Theorem 4, if $X$ is a lattice, since  the qualifier  ``pseudo'' then becomes immaterial. The pseudo quasi-supermodularity and single-crossing conditions, just like quasi-supermodularity and single-crossing under the lattice environment, play a key role in our subsequent applications. These conditions deliver MCS in the ``weak'' pSS order, but they require a correspondingly weaker condition for the feasible sets.  We can obtain MCS in the ``strong'' pSS order, for free, under a stronger condition on the feasible sets:

\begin{proposition} [Monotonicity] \label{cor:MCS-sublat}    If $u$ is pseudo quasi-supermodular in $x$ and satisfies single-crossing in $(x,t)$, then 
$M_{S'} (t') \ge_{pss} M_{S} (t)$  for all $t' \ge t$ and $S'\ge_{pss} S$.
\end{proposition}
\begin{proof}   For any $x \in M_S(t)$ and $x' \in M_{S'}(t')$, choose $z \in x \meet x'$ and $z' \in x \join x'$.  Then, $z\in S$ and $z'\in S'$ (from $S'\ge_{pss} S$), so  $u(x,t) \ge u(z,t)$ and, by pseudo quasi-supermodularity, it follows that $u(z',t) \ge u(x',t)$.  By single-crossing, $u(z',t') \ge u(x',t')$, so $z' \in M_{S'}(t')$. That $z \in M_S(t)$ follows analogously from the strict inequality parts of \eqref{pqsup} and \eqref{SC}.
\end{proof}

Since the cardinal conditions imply the ordinal ones, the following is trivial:
\begin{corollary} \label{cor:MCS-cardinal}   If $u$ is pseudo supermodular in $x$ and satisfies increasing differences in $(x,t)$, then 
$M_{S'} (t') \ge_{pss} M_{S} (t)$  for all $t' \ge t$ and $S'\ge_{pss} S$.
\end{corollary}

As in \cite{milgrom/shannon:94}, a strengthening of single crossing yields monotonicity regardless of how an optimal choice is selected.

\begin{corollary} [Monotone Selection] \label{cor:MCS-sscp}  Suppose $u$ is pseudo quasi-supermodular in $x$ and satisfies strict single-crossing in $(x,t)$: for any $x'>x$, $t' > t$,
$$u(x',t)-u(x,t)\ge 0 \Rightarrow u(x',t')-u(x,t')  >0,$$ and assume $S'\ge_{pss} S$.  Then,  $\forall  z\in M_{S} (t)$,
$\forall z'\in M_{S'} (t')$, we have $z'\ge z.$
\end{corollary}

\begin{proof} Fix $z \in M_S(t)$ and $z' \in M_{S'}(t')$. 
By \Cref{cor:MCS-sublat}, any $\tilde z \in z \meet z'$ satisfies $\tilde z \in M_S(t)$, so $u(z,t) - u(\tilde z,t) = 0$. 
If $z' \not\ge z$, then $\tilde z < z$, and strict single crossing implies $u(z,t') - u(\tilde z,t') > 0$. 
Pseudo quasi-supermodularity then implies $u(\hat z,t') - u(z',t') > 0$ for some $\hat z \in z \join z' \subset M_{S'}(t')$, contradicting $z' \in M_{S'}(t')$.  
\end{proof}

\paragraph{Directional weakening.} 
One can provide a weaker ``directional'' version of \Cref{thm:MCS-PSS} by splitting the conditions into upper and lower versions. We say  $S'\subset X$ \textbf{upper} (resp. \textbf{lower}) \textbf{weak pSS dominates} $S\subset X$, and write $S'\ge_{uwpss} S$ (resp. $S'\ge_{lwpss} S$), if $\forall x\in S, \forall x'\in S'$, $x\join x'\cap S'$ (resp.  $x\meet x' \cap S$) is nonempty.  Next, we say that a function $u$ is \textbf{upper} (resp. \textbf{lower}) \textbf{pseudo quasi-supermodular} if the weak (resp. strict) inequality part of \eqref{pqsup} holds, and that $u$ is upper (resp. lower) single-crossing if the weak (resp. strict) inequality part of \eqref{SC} holds.
  Then, we have:
\begin{manualtheorem}{3$'$} \label{thm:MCS-UPSS}
$M_{S'} (t') \ge_{uwpss} M_{S} (t)$ for all $t' \ge t$ and   $S'\ge_{wpss} S$ if and only if $u$ is upper pseudo quasi-supermodular in $x$ and satisfies upper  single-crossing in $(x,t)$.  An analogous statement holds for the lower case.\footnote{That is, the same statement holds when $\ge_{uwpss}$, upper pseudo quasi-supermodularity, and upper single-crossing are replaced by  $\ge_{lwpss}$, lower pseudo quasi-supermodularity, and lower single-crossing, respectively.}   
   \end{manualtheorem}
The upper and lower versions of \Cref{thm:MCS-UPSS} jointly imply \Cref{thm:MCS-PSS}, which is how the latter theorem is proven in \Cref{app:ind-choice}.
 
\subsection{The structure of the maximizer set}

As in the lattice case, the complementarity property of the payoff function over a pseudo lattice domain induces a distinctive structure for the set of maximizers. To ensure that this set is well defined, we assume that the payoff function $u : X \to \mathbb{R}$ is \textbf{order upper semicontinuous}: for any chain $C = \{ {x_\alpha} \}_\alpha \subset X$, 
 $$\inf_{\alpha }  \sup_{ \beta \ge \alpha } u (x_\beta ) \le u (x), $$ where $x$ is the  infimum or supremum of $C$, if it exists.  

\begin{theorem} \label{thm:pseudo-argmax} Assume $X$ is a pseudo lattice and $u: X \to \mathbb R$ is   pseudo quasi-supermodular. 
\begin{description}
    \item[(i)] $\arg\max_{x\in X} u (x)$ is  a pseudo sublattice of $X$ whenever it is nonempty.

    \item [(ii)]  If, in addition, $X$ is a complete pseudo lattice and $u$ is order upper semicontinuous, then $\arg\max_{x\in X} u (x)$ is a nonempty, complete  pseudo sublattice, admitting the largest and the  smallest  point.  
\end{description}  
\end{theorem}

That the maximizers inherit the complete pseudo lattice structure is interesting in its own right, but it will also be crucial for the subsequent analyses of games.  This result generalizes Theorem 2 of \cite{milgrom/roberts:90}  and Theorem A4 of \cite{milgrom/shannon:94}. However, the lack of a lattice structure of the domain makes the argument distinct and more involved.

\subsection{Decision under uncertainty} \label{sec:decision under uncertainty} 

We now apply our framework to optimal behavior under uncertainty---a setting central to mixed-strategy equilibrium analysis---where an agent chooses a potentially random action to maximize expected utility. Let $u: A \times \Theta \to \mathbb{R}$ be the utility function, with $A$ and $\Theta$ denoting partially ordered sets of actions and states, respectively. The agent's uncertainty is represented by a probability distribution $\eta \in \Delta(\Theta)$ over the state space; in game-theoretic contexts, $\eta$ often represents the mixed strategy profile of opponents. For any randomized action $x \in X = \Delta(A)$ and state distribution $\eta \in \Delta(\Theta)$, we define the expected utility  as
\begin{equation} \label{expected-utility}
\overline{u}(x, \eta) := \int_A \left( \int_{\Theta} u(a, \theta) \eta(d\theta) \right) x(da) 
\end{equation} and the associated choice correspondence as   \begin{equation}
    \label{expected-choice}    M(\eta) := \arg \max_{x \in X} \overline{u}(x, \eta).
\end{equation}  
 We endow both $\Delta(A)$ and $\Delta(\Theta)$ with the stochastic dominance order induced by the underlying orders on $A$ and $\Theta$. 

Our aim is to establish the monotone comparative statics of $M(\eta)$ as $\eta$ shifts.  However, applying \Cref{thm:MCS-PSS} directly to this lottery space is difficult. As illustrated next, the induced expected utility function $\overline{u}(x, \eta)$ may fail to satisfy pseudo quasi-supermodularity  even when the underlying utility $u(a, \theta)$ is supermodular.

\begin{example}[\textbf{Failure of pQSUP}] \label{ex:expected-utility-fail}
Consider the space described in \Cref{KKO example}, where $X = \Delta(\{0,1\} \times \{0,1\})$ is endowed with the stochastic dominance order, alongside a utility function $u: \{0,1\} \times \{0,1\} \to \mathbb{R}$ defined by $u(0,0)=u(1,0)=2$, $u(0,1)=0$, and $u(1,1)=1$. While $u$ is supermodular, the associated expected utility $\overline{u}$ fails \eqref{pqsup}. To see this, let $x=\frac{1}{2}\delta_{(0,0)}+\frac{1}{2}\delta_{(1,1)}$, $x^{\prime}=\frac{1}{2}\delta_{(1,0)}+\frac{1}{2}\delta_{(0,1)}$, $\underline{x}=\frac{1}{2}\delta_{(0,0)}+\frac{1}{2}\delta_{(0,1)}$, and $\overline{x}=\frac{1}{2}\delta_{(0,1)}+\frac{1}{2}\delta_{(1,1)}$. Note that $\underline{x} \in x \meet x'$ and $\overline{x} \in x \join x'$. We have $\overline{u}(x) = 3/2 > 1 = \overline{u}(\underline{x})$, but $\overline{u}(\overline{x}) = 1/2 < 1 = \overline{u}(x')$, violating \eqref{pqsup}.
\end{example}

Since  pseudo quasi-supermodularity is necessary for the characterization in \Cref{thm:MCS-PSS}, its failure implies that optimal choices do not generally exhibit monotone comparative statics on arbitrary subsets $S \subset \Delta(A)$, such as the set $\{x, x', \underline{x}, \overline{x}\}$ in the example. Nevertheless, we can recover monotone comparative statics in ``standard'' cases where the choice set includes pure actions.  Our analytical approach is to exploit the linearity of the expected payoff $\overline{u}$ in $x \in X$. In such cases, monotone comparative statics of $M(\eta)$ can be obtained in the weak-set order directly from standard conditions on the underlying utility $u$.

\begin{proposition} \label{cor:random_choice}
Assume that $A$ is a complete pseudo lattice.  Assume also that  $u(a,\theta)$ is bounded, order upper semicontinuous,  pseudo supermodular in $a$, and satisfies increasing differences in $(a,\theta)$. Then, for each $\eta \in \Delta (\Theta)$, $M(\eta):=\arg\max_{x\in \Delta(A)} \overline{u} (x,\eta)$ has largest and smallest elements, both of which belong to $A$ and are nondecreasing in $\eta$. Hence, $M(\eta')\ge_{ws} M(\eta)$ for all $\eta,\eta' \in \Delta (\Theta) $ with $\eta ' \ge^{sd} \eta$.
\end{proposition} 

The result consists of two main observations. Consider first a maximization where the choice set is restricted to only pure actions $A$. By \Cref{thm:pseudo-argmax}, this restricted problem admits extremal optima, and \Cref{thm:MCS-PSS} implies that they are pSS monotonic in $\eta$  with respect to first-order stochastic dominance. This is because pseudo supermodularity and increasing-difference---the cardinal conditions---are preserved under expectation {\it over pure actions}.

Now, when we extend the choice space to the lottery space $\Delta(A)$, these pure actions remain optimal because $\overline{u}(\cdot,\eta)$ is linear in lotteries. Moreover, they remain extremal because any optimal lottery must assign its entire probability mass within the interval bounded by the two extremal pure optima. Note, however, that we still pay a price for the failure of pseudo quasi-supermodularity: the monotone comparative statics property holds only in the weak-set order. As will be seen in the following sections, \Cref{cor:random_choice} is sufficiently powerful to establish the desired comparative statics for mixed-strategy Nash equilibria and perfect equilibria.

\section{Fixed Points and Nash Equilibria}   \label{sec:FP}

This section develops the analytical machinery for equilibrium analysis on non-lattice domains. We first establish a fixed-point theorem for pseudo monotonic correspondences along with associated comparative statics results. We then apply these tools to generalize the theory of games with strategic complementarities, providing a unified framework that encompasses both pure and mixed strategies.
 

 \subsection{Existence and comparative statics of fixed points} \label{sec:fixed point}


Consider a self correspondence $F: X \rightrightarrows X$ defined over a complete pseudo lattice $X$ endowed with a partial order $\ge$.  We say that an element $x\in X$ is a \emph{fixed point}  of  $F$  if $x\in F(x)$.   

Let us call a self-correspondence $F: X \rightrightarrows X$ \textbf{pseudo monotonic} if (i) $F(x)$ is a nonempty complete pseudo sublattice for each $x \in X$ and (ii) $F$ is pSS monotonic, i.e.,  $F(x') \ge_{pss} F (x) $ for all $x', x \in X$ with $x' \ge x$.

\begin{theorem} \label{thm:tarski-correspondence}
If $F: X \rightrightarrows X$ is a pseudo monotonic correspondence on a complete pseudo lattice $X$, then its fixed-point set is a nonempty complete pseudo lattice, thus admitting the largest and smallest points.
\end{theorem}

When $F$ is a function, the following generalization of \cite{tarski:55}'s theorem obtains: 

\begin{corollary} \label{cor:tarski} If $F:X\to X$ is a nondecreasing function on a complete pseudo lattice $X$, then its fixed-point set  is a nonempty complete pseudo lattice. 
\end{corollary}

The theorems by \cite{zhou:94} and  \cite{tarski:55} are the closest antecedents of \Cref{thm:tarski-correspondence} and \Cref{cor:tarski}, respectively.  The crucial differences are that they require $X$ to be a complete lattice and  $F$, in case of a correspondence, to satisfy additional lattice properties.\footnote{Specifically, the correspondence $F$ is required to be complete sublattice-valued and monotonic in the strong-set dominance sense; see \cite{zhou:94}.}  Our results relax these lattice properties to pseudo-lattice counterparts. 
Recall these properties are substantially weaker.  For example, a set $X$ is a complete pseudo lattice if $X$ is compact and contains the largest and smallest points (\Cref{cor:compact-with-ext}).\footnote{\cite{abian1961theorem}, \cite{smithson:71}, and \cite{li:14}'s fixed-point theorems require even weaker conditions. For instance, the version of Li's theorem invoked by  \cite{che2019weak}  requires $X$ to be only a compact partially-ordered set, together with some regularity conditions ensuring the existence of an upper diagonal or lower diagonal point.   The complete pseudo lattice condition is not much stronger than this; it is weaker than the compactness with extremal points (see \Cref{lem:compact-chaincomp}), which only strengthens the regularity condition.  However, the results are considerably more powerful: fixed points are a complete pseudo lattice, so they contain the largest and smallest points, which is not guaranteed by that theorem (see \cite{che2019weak}).  As will be seen, these properties will be used crucially for the later application, particularly the monotone comparative statics of perfect equilibria.} The only price paid is that the fixed points form a complete pseudo lattice rather than a complete lattice.  Crucially, a complete pseudo lattice always admits extremal elements; thus, our framework remains sufficiently powerful to guarantee the existence of largest and smallest fixed points, which provides the foundation for the equilibrium analysis in the subsequent sections.

Crucially for our purposes, pseudo monotonic correspondences are readily amenable to monotone comparative statics analysis. Let   $\Fp(t)$ denote the set of fixed points of a parametrized self-correspondence  $F (\cdot,t): X \rightrightarrows  X$.  

\begin{theorem} \label{thm:mcs of fp}  For a family of  pseudo monotonic self-correspondences $F (\cdot, t)$  on a complete pseudo lattice $X$,  if $F (x,t') \ge_{ws} F (x,t)$ for all $x \in X$,  then  $\Fp (t') \ge_{ws} \Fp (t)$.
\end{theorem}

\paragraph{Directional weakening.}
As in the individual-choice setup, one may introduce directional weakenings of the monotonicity requirement for the fixed-point correspondence.\footnote{See the generalized Bertrand game in \Cref{sec:pQSup_games} for an instance where such a weakening is needed.}  A subset $S$ of $X$ is a \textbf{complete upper} (resp. \textbf{lower}) \textbf{pseudo sublattice} if it is chain complete and, for every nonempty $S' \subseteq S$, $\Join_X S'$ (resp. $\Meet_X S'$) is a  nonempty subset of $S$. We say  $S'\subset X$ \textbf{upper} (resp. \textbf{lower}) \textbf{pSS dominates} $S\subset X$, and write $S'\ge_{upss} S$ (resp. $S'\ge_{lpss} S$), if $\forall x\in S, \forall x'\in S'$, $x\join x'\subset S'$ (resp. $x\meet x'\subset S$).

If $F(x)$ is a nonempty complete upper  pseudo sublattice for each $x \in X$, and
$F (x') \ge_{upss}  F (x), \forall  x' \ge x$, then  we say $F$ is \textbf{upper pseudo monotonic}---and \textbf{lower pseudo monotonicity} is defined analogously.   \begin{manualtheorem}{5$'$} \label{thm:5prime}
If $F: X \rightrightarrows X$ is an upper (resp. lower) pseudo monotonic correspondence on a complete pseudo lattice $X$, then  its fixed-point set   is  nonempty and admits the largest (resp. smallest) point.
 \end{manualtheorem} 
To establish the comparative statics result, let us weaken the weak-set order similarly: we say that $S'$ \textbf{upper weak-set dominates} $S$, and write $S'\ge_{uws} S$, if  for every $x\in S$, there exists $x'\in S'$ with $x'\ge x$;  $S'$ \textbf{lower weak-set dominates} $S$, written $S' \ge_{lws} S$, if  for every $x'\in S'$, there exists $x\in S$ with $x\le x'$.
  \begin{manualtheorem}{6$'$} \label{thm:6prime}
Let  $F (\cdot, t)$ and $F (\cdot,t')$ be  self-correpondences defined on a complete pseudo lattice $X$. If $F (\cdot, t')$ is upper pseudo monotonic and $F (x, t') \ge_{uws}  F(x, t)$ for all $ x \in X$, then $\Fp (t') \ge_{uws} \Fp (t)$.\footnote{Notice that the monotonicity  restriction is imposed only on 
$F(\cdot, t')$, but \emph{not} on $F ( \cdot,t)$. This generality will play an important role in our later analysis.} If $F (\cdot, t)$ is lower pseudo monotonic and $F (x, t') \ge_{lws} F (x, t) $ for all $ x \in X $, then $ \Fp (t') \ge_{lws} \Fp (t)$.\footnote{This result is related to Theorem 3 in \cite{acemoglu2015robust}, in that both the fixed-point operator and the fixed-point set shift in the weak-set order. Their analysis, however, does not rely on pseudo-monotonicity; instead, it requires $X$ to be compact and $F$ to be upper hemicontinuous.}
 \end{manualtheorem}

\subsection{Pseudo quasi-supermodular games} 
\label{sec:pQSup_games}

Consider a normal-form game $\Gamma=(I,S,u)$, where $S=\times_{i\in I}S_{i}$. We assume that each player's strategy set $S_{i}$ is a complete pseudo lattice and that $S$ is endowed with the product order. A strategy profile $s=(s_{i})_{i\in I}$ is a (pure-strategy) {\it Nash equilibrium} if $u_{i}(s)\ge u_{i}(s_{i}^{\prime},s_{-i})$ for every $i\in I$ and $s_{i}^{\prime}\in S_{i}$. 


We say that $\Gamma$ is a \textbf{pseudo quasi-supermodular game} if, for all $i\in I$,
\begin{enumerate}
    \item[(P1)] $u_{i}$ is bounded, and  order upper semicontinuous in $s_i$ for each $s_{-i}\in S_{-i}$;
    \item[(P2)] $u_i$ is pseudo quasi-supermodular in $s_i$ and satisfies  single-crossing in $(s_i,s_{-i})$.
\end{enumerate}

This class generalizes the quasi-supermodular games of \cite{milgrom/shannon:94}. It relaxes the crucial requirements that strategy spaces be lattices and that best-response correspondences form sublattices. Drawing on our individual choice results (\Cref{sec:ind-choice}), conditions (P1) and (P2) together ensure that each player's best-response correspondence is pSS monotonic. This allows us to apply our fixed-point results (\Cref{sec:fixed point})  to establish the existence and comparative statics of equilibria.

As a cardinal specialization of a pseudo quasi-supermodular game, we say that $\Gamma$ is a \textbf{pseudo supermodular game} if, for all $i\in I$,  (P1) holds and 
\begin{enumerate}[label=(\roman*')]
   \item[(P2$'$)] $u_i$ is pseudo supermodular in $s_i$ and satisfies increasing-differences in $(s_i,s_{-i})$.
\end{enumerate} Pseudo supermodular games prove useful in later sections when we study mixed-strategy Nash equilibria and perfect equilibria. This is because, as a cardinal concept, the property (P2$'$) is preserved under randomization.

We next establish the existence, structure, and monotone comparative statics of Nash equilibria for  pseudo quasi-supermodular games. For this purpose, we parametrize the players' payoff functions as $u_i(\cdot, t): S \to \mathbb{R}$ and $u(\cdot, t) = (u_i(\cdot, t))_{i \in I}$. Letting $\Gamma(t) = (I, S, u(\cdot, t))$, we denote the set of pure Nash equilibria of $\Gamma(t)$ by $\Eq(t)$.

\begin{proposition} \label{prop:NE_existence_MCS}  For a family of pseudo quasi-supermodular   games  $\Gamma (t)$,
\begin{description} 
\item[(i)]  the set of (pure)  Nash equilibria $\Eq(t)$ is a nonempty complete pseudo lattice; 
\item[(ii)]  if $u_{i}$ satisfies single-crossing in $(s_i,t)$ for all $i \in I$, then $\Eq(t')\ge_{ws}\Eq(t)$ for all $t'>t$.
\end{description}
\end{proposition}

The proof builds directly on our earlier results. By \Cref{thm:pseudo-argmax}, each player's best-response correspondence $B_i(s_{-i}, t)$ is a nonempty complete pseudo sublattice. By \Cref{cor:MCS-sublat}, it is also pSS monotonic in $s_{-i}$. The joint best-response correspondence $B(\cdot, t)$ is therefore pseudo monotonic, so applying \Cref{thm:tarski-correspondence} establishes that the equilibrium set $\Eq(t)$ is a nonempty complete pseudo lattice. The comparative statics result in part (ii) follows from \Cref{thm:mcs of fp}.



\paragraph{Application to generalized Bertrand games.}

We now apply our equilibrium framework to a class of generalized Bertrand games. Consider a finite set of firms $I$, where each firm $i \in I$ chooses a price $p_i$ from a finite set $P_i \subset \mathbb{R}_+$. Given a price profile $p \in P := \prod_{j \in I} P_j$, firm $i$ sells $D_i(p_i, p_{-i})$ units of its product at a cost determined by an increasing, convex function $C_i: \mathbb{R}_+ \to \mathbb{R}_+$. Firm $i$'s payoff  is its profit
\begin{equation} \label{eq:bertrand_profit}
U_{i}(p):=p_{i}D_{i}(p)-C_{i}(D_{i}(p)).
\end{equation}
We assume that each firm's demand function $D_i: P \to \mathbb{R}_+$ satisfies:
\begin{itemize}
    \item [(D1)]  $D_{i}$ is weakly decreasing in $p_{i}$ and weakly increasing in $p_{-i}$; 
    \item [(D2)]$\frac{D_{i}(p_{i}^{\prime},p_{-i})}{D_{i}(p_{i},p_{-i})}\le\frac{D_{i}(p_{i}^{\prime},p_{-i}^{\prime})}{D_{i}(p_{i},p_{-i}^{\prime})}$ for any $p_{i}<p_{i}^{\prime}$ and $p_{-i}<p_{-i}^{\prime}$ such that $D_{i}(p_{i},p_{-i})>0$.
\end{itemize}
 
The second part of (D1) indicates that the firms' products are substitutes. Condition (D2) strengthens this property by implying that firm $i$'s demand becomes more inelastic as rivals' prices increase.

This framework generalizes the Bertrand games studied by \citet{milgrom/shannon:94}, who assume a stronger version of (D2) requiring $D_i$ to be strictly positive and continuously differentiable.\footnote{Strictly speaking, \citet{milgrom/shannon:94} do not assume the finiteness of $P_{i}$. It ensures condition (P1) required of (lower) pseudo quasi-supermodular games and ensures the existence of a Nash equilibrium in pure strategies.} In contrast, our condition applies only when demand is strictly positive. Notably, our weaker requirement encompasses pure Bertrand games, which their analysis excludes.\footnote{More formally, a generalized Bertrand game is a pure Bertrand game if, for each i, $C_{i}(q)=c_{i}q$ for some $c_{i}\in[0,\max_{p_{i}\in P_{i}}p_{i}]$, and $D_{i}(p)=1/|\arg\min_{j\in I}p_{j}|$ if $p_{i}=\min_{j\in I}p_{j}$ and $D_{i}(p)=0$  otherwise. Then, one can show that a pure Bertrand game is a generalized Bertrand game; see   \Cref{lem:pure-bertrand} in \Cref{app:generalized-bertrand} of the Supplementary Appendix.} 
Beyond pure Bertrand competition, this class includes various games where demand may drop to zero at certain price profiles or where demand functions are discontinuous—scenarios typically excluded from the existing monotone comparative statics literature.

Notably, a generalized Bertrand game is not necessarily pseudo quasi-supermodular, so our earlier theorems do not apply directly. To see this, consider a standard Bertrand game with two firms and focus on firm 1’s payoff when it has constant marginal cost $c_1$. If firm 2 sets $p_2 < c_1$, then any $p_1 > p_2$ yields zero demand and thus maximizes firm 1’s profit at zero. By contrast, if firm 2 raises its price to some $p_2' > c_1$, then firm 1 can earn strictly positive profit, and no price $p_1 > p_2'$ maximizes its profit (assuming $P_1$ is sufficiently dense). Consequently, firm 1’s payoff fails the single-crossing property, meaning that the game is not pseudo quasi-supermodular.


Despite this failure, the game satisfies the ``lower'' requirements of a pseudo quasi-supermodular game---the violation occurs only in the ``upper'' single-crossing condition. Thus, our directional results (\Cref{prop:NE_existence_MCS-lower} in the Appendix) apply, ensuring both the existence and comparative statics of equilibria.\footnote{To intuitively see why this directional  generalization holds, note that our equilibrium existence and comparative statics results are derived from the properties of best-response correspondences and the associated fixed-point theorems. As established in \Cref{sec:ind-choice} and \Cref{sec:fixed point}, both the individual choice results and the fixed-point theory admit directional counterparts that remain valid in this setting. A formal statement and proof are provided in the Appendix.}

For our comparative statics analysis, consider a family of games $\Gamma(t) = (I, P, (U_i(\cdot, t))_{i \in I})$ satisfying:
\begin{itemize}
\item[(B1)] If $D_i(p_i, p_{-i}, t) > 0$, then $D_i(p_i, p_{-i}, t') > 0$ and $\frac{D_i(p_i', p_{-i}, t')}{D_i(p_i, p_{-i}, t')} \ge \frac{D_i(p_i', p_{-i}, t)}{D_i(p_i, p_{-i}, t)}$ for all $t' > t$ and $p_i' > p_i$.
\item[(B2)] $C_i(q', t) - C_i(q, t)$ is weakly increasing in $t$ for all $q' > q$.
\end{itemize}   
Intuitively, a higher parameter $t$ corresponds to more inelastic demand or higher marginal costs. To derive payoff implications, we consider a slight strengthening of condition (B2):
\begin{enumerate}
\item[(B2$'$)] For $t < t'$, $C_i(q, t) = c_i q \le c_i' q = C_i(q, t')$ with $\max_{p_i \in P_i} p_i \ge c_i$, and $D_i(p, t) \le D_i(p, t')$ for all $q \in \mathbb{R}_+$ and $p \in P$.
\end{enumerate}
Condition (B2$'$) implies condition (B2). With these preparations, we state our results for generalized Bertrand games below.

 
\begin{corollary}  \label{cor:general_bertrand} For a  family  of generalized Bertrand games  $\Gamma (t)$, 
\begin{description}
    \item[(i)]  if   (B1) and (B2) hold for each $i \in I$, then $\Eq(t)  \ne \emptyset$ and $\Eq(t')\ge_{lws}\Eq (t)$ for all  $t' > t$; 
    \item[(ii)]   if   (B1) and  (B2$'$) hold for each $i \in I$,  then the set of equilibrium profits for each firm $i$ with $c_{i}'=c_{i}$ in $\Gamma (t')$ lower weak-set dominates that in $\Gamma (t)$ for all $t' > t$.
\end{description}      
\end{corollary}

\subsection{Mixed-strategy Nash equilibria} \label{sec:mixed} 

A long-standing challenge in the theory of monotone comparative statics is its extension to mixed strategies.  While mixed strategies are essential for game-theoretic analysis,  the set of mixed strategies $\Delta(S_i)$, ordered by first-order stochastic dominance, generally fails to form a lattice---even when the underlying pure-strategy space $S_i$ is a lattice; recall \Cref{KKO example}.\footnote{In this example, one can take $X = \Delta(\{0,1\}^2)$ to represent player $i$’s mixed-strategy space over the pure strategy set $\{0,1 \}^2$.}  \cite{echenique2003mixed} explicitly identifies this non-lattice structure in multidimensional settings as the primary reason why the standard complementarity framework has struggled to accommodate mixed strategies.

By dispensing with the lattice requirement and utilizing our fixed-point results for pseudo lattices, we provide a unified framework for the MCS analysis of mixed strategies. This section also serves as a crucial intermediate step for our analysis of perfect equilibria in \Cref{sec:wmcs of thpe}. Establishing tools to handle mixed strategies enables us to conduct MCS on refinements, such as perfect equilibria, which require perturbing to fully mixed strategies.

We restrict our attention to a pseudo supermodular game $\Gamma=(I,S,u)$, as cardinal payoff properties such as pseudo supermodularity are preserved under randomization. We assume each pure-strategy set $S_i$ is a compact, partially ordered Polish space containing the largest and smallest elements.     Let $\mG=(I, \Sigma, \overline{u})$ be the mixed extension of $\Gamma$, with $\Sigma:= \times_{i \in I} \Sigma_i$ and $\bar u:= (\bar u_i)_{i \in I}$, where $\Sigma_i = \Delta(S_i)$ is the set of player $i$'s mixed strategies and $\overline{u}_i: \Sigma \to \mathbb{R}$ is the expected payoff defined by:
\begin{equation} \label{expected-utility-mixed}
\overline{u}_{i}(\sigma):=\int u_{i}(s)\sigma(ds), \sigma\in\Sigma.
\end{equation}
Each $\Sigma_i$ is partially ordered by the first-order stochastic dominance relation $\ge^{sd}$. Further,  since each $\Sigma_i$ is compact and contains extremal elements, it forms a complete pseudo lattice by \Cref{cor:compact-with-ext}.  

A mixed strategy profile $\sigma=(\sigma_{i})_{i\in I}$ is a mixed-strategy Nash equilibrium if $\bar u_{i}(\sigma)\ge \bar u_{i}(\sigma_{i}^{\prime},\sigma_{-i})$ for every $i\in I$ and $\sigma_{i}^{\prime}\in \Sigma_{i}$. 
For a parametrized game $\G(t) = (I, S, u(\cdot, t))$  and its mixed extension $\mG(t)=(I,\Sigma,\bar u(\cdot, t))$, we let   $\NE(t)$ denote the set of all Nash equilibria of $\mG (t)$---that is, the set of all pure and mixed Nash equilibria of $\G (t)$.

\begin{theorem} \label{thm:mcs-mixed}
For a family of  pseudo supermodular games $\Gamma(t)$ and their mixed extensions $\mG(t)$,
\begin{description}
    \item[(i)] $\NE(t)$ possesses the largest and smallest elements,  which are both pure;
    \item[(ii)] if each $u_i$ satisfies increasing differences in $(s_i,t)$, then $\NE(t') \ge_{ws} \NE(t)$ for all $t' > t$.
\end{description}
\end{theorem}

The proof leverages the complete pseudo lattice structure of the mixed-strategy space. A primary difficulty, as discussed in \Cref{sec:decision under uncertainty} and \Cref{ex:expected-utility-fail}, is that the player's best-response correspondence 
\begin{equation} \label{br_mixed_nash}
B_{i}(\sigma_{-i}) := \arg \max_{\sigma_i \in \Sigma_i} \overline{u}_i(\sigma_i, \sigma_{-i})
\end{equation}
need not be pSS monotonic in rivals' mixed strategies. However, by \Cref{cor:random_choice}, each $B_i(\sigma_{-i})$ possesses largest and smallest elements that are pure strategies. These extremal best responses are nondecreasing in $\sigma_{-i}$ and therefore pSS monotonic. We can thus ``sandwich'' the full best-response correspondence between these monotonic extremal selections. Applying \Cref{thm:tarski-correspondence} and \Cref{thm:mcs of fp} then establishes the existence and comparative statics of the largest and smallest equilibria, both of which are pure.\footnote{This ``sandwiching'' of mixed equilibria between extremal pure equilibria was identified by \citet[Theorem 5]{milgrom/roberts:90} for lattice-based supermodular games. However, their approach relies on a serial undomination argument that requires full continuity of payoffs. In contrast, our approach utilizes monotone shifts in best-response correspondences on a pseudo lattice, establishing existence and MCS under weaker continuity assumptions and on more general domains.}

\section{Perfect Equilibria}
\label{sec:wmcs of thpe}

The existing monotone comparative statics analyses of games have been largely confined to Nash equilibria.\footnote{See, for instance, \citet{echenique2004extensive}, which develops a notion of supermodularity for dynamic games and applies it to subgame-perfect equilibrium.} 
 Extending the analysis to refinements such as (trembling-hand) perfect equilibria is important. For example, classical games motivating perfection are supermodular.\footnote{For instance, the two games below are supermodular.
\[
\begin{array}{c|cc}
\multicolumn{3}{c}{\text{Coordination Trap}} \\
 & L & R \\ \hline
T & 0,0 & 1,1 \\
B & 0,0 & 0,0
\end{array}
\qquad
\begin{array}{c|cc}
\multicolumn{3}{c}{\text{Entry Deterrence}} \\
 & Fight & Accom. \\ \hline
Enter & -1,-1 & 1,1 \\
Not & 0,2 & 0,2
\end{array}
\]
The profiles $(B,L)$ and $(\mbox{Not, Fight})$ of the respective games are Nash but not perfect.}

However, such an extension presents a fundamental challenge: it requires dealing with perturbations in fully mixed strategies. As discussed earlier, however, the space of mixed strategies does not form a lattice---let alone a complete lattice---making the standard lattice-based machinery inapplicable. Our approach, which  accommodates a more general domain structure, overcomes this limitation and establishes both the existence of perfect equilibria in pure strategies and their monotone comparative statics properties.





Let us adopt  the same framework as  \Cref{sec:mixed}  and   consider a  pseudo supermodular  game $\G = (I, S, u)$.  Let    $\mG = (I, \Sigma, \bar{u})$ denote the mixed extension of $\Gamma$, where each $\Sigma_i = \Delta(S_i)$ is partially ordered by the first-order stochastic dominance relation $\ge^{sd}$ and $\bar u = (\bar u_i)_{i \in I}$ with $\bar u_i$ defined in \eqref{expected-utility-mixed}.  

Following \citet{selten:75}, we consider \emph{perturbed} games in which players are constrained to play \emph{fully mixed} strategies. 
To formalize this idea in general (not necessarily finite) games, for each player $i$, let $\mathcal S_i$ be the Borel $\sigma$-algebra on $S_i$, and let $\mathcal M_i$ denote the set of all nonnegative measures $\mu_i$ on $\mathcal S_i$ satisfying $\mu_i(S_i)\le 1$. For two measures $\mu_i, \mu_i' \in \mathcal{M}_i$, let us write  $\mu_i' \sqsupseteq \mu_i$ if  $\mu_i'(S_i') \ge \mu_i(S_i')$ for all $S_i' \in \mathcal S_i$.  
Let $\mathcal M_i^0 \subset \mathcal M_i$ denote the set of full-support measures, i.e.,  the set of all measures $\mu_i \in \mathcal M_i$ such that $\mu_i(S_i') > 0$ for every nonempty open set $S_i' \in \mathcal S_i$.  
Let $\mathcal M := \times_i \mathcal M_i$ and $\mathcal M^0 := \times_i \mathcal M_i^0$.

For $\mu = (\mu_i)_{i\in I}\in \mathcal M$, define
$$
\Sigma_i^{\mu} := \{\sigma_i\in \Sigma_i : \sigma_i \sqsupseteq \mu_i\},
$$
the set of player $i$’s mixed strategies that place at least the measure $\mu_i$ on every Borel subset of $S_i$.  
As shown in the Online Appendix, $\Sigma_i^{\mu}$ inherits the complete pseudo lattice structure of $S_i$.\footnote{See \Cref{lem:constrained game st space} in the Online Appendix for the proof.}  
When $\mu \in \mathcal M^0$, the strategy space $\Sigma^\mu$ captures the idea that players are constrained to play fully mixed strategies---that is, each player $i$ must assign positive probability to every open subset of $S_i$. Let $\mG^{\mu} := (I,\Sigma^{\mu},\bar u)$ denote the $\mu$-constrained game of $\mG$, where $\Sigma^{\mu} = \times_i \Sigma_i^{\mu}$.

To quantify the size of  perturbations, define for each $\mu_i\in\mathcal M_i$ the total variation norm
$$\|\mu_i \|=\sup_{S_i'\in\mathcal S_i}|\mu_i(S_i')|=\mu_i(S_i),$$
and for each profile $\mu=(\mu_i)_i$, set $\| \mu \|=\max_i \|\mu_i \|$.

We are now ready to define perfect equilibrium: A strategy profile $\sigma$ of the game $\mG = (I, \Sigma, \bar u)$ is a \textbf{perfect equilibrium} if there exists a sequence of $\mu^n$-constrained games $\mG^{\mu^n}$ with $\mu^n \in \mathcal M^0$ and $\|\mu^n \| \to 0$ such that each $\mG^{\mu^n}$ admits a Nash equilibrium $\sigma^n$ converging weakly to $\sigma$.

Our formulation of constrained games and perfect equilibrium parallels Selten’s and coincides with his original definition when strategy spaces are finite. For games with infinite strategy spaces, our definition corresponds to \citet{simon1995equilibrium}’s notion of strong perfect equilibrium.\footnote{These authors advocate this notion because it preserves the “hallmark” property of \emph{limit admissibility}, meaning that a strategy places no mass in the interior of the set of weakly dominated strategies. In finite games, limit admissibility coincides with admissibility (that requires weakly dominated strategies to be played with zero probability). \citet{simon1995equilibrium} also define a weaker concept, called weak perfect equilibrium, which fails to satisfy limit admissibility.}

In the setting of infinite normal-form games, it refines the notion of Nash equilibrium when payoff functions are continuous:\footnote{To see that payoff continuity cannot be dispensed with, consider a game in which each player $i=1,2$ chooses $x_i\in [0,1]$ simultaneously, and the payoffs are $(1,1)$ for all $(x_1,x_2)$, except when $x_1=x_2=1$, in which case both players get zero payoffs. Every pair $(x_1,x_2)\ne (1,1)$ is a perfect equilibrium, so a limit point $(x_1,x_2)=(1,1)$ is perfect as well.  But it is not a Nash equilibrium.}
\begin{lemma}\label{lem:perfect is nash}
If each $u_i$ is continuous, then any perfect equilibrium is also a Nash equilibrium.
\end{lemma}
\noindent The payoff continuity is used only for \Cref{lem:perfect is nash} in this section; the subsequent results do not require this assumption.

To analyze the  constrained game $\mG^{\mu}$, it is useful to study a particular class of strategies in this game. We say a (mixed) strategy $\sigma_i \in \Sigma_i^{\mu}$ is \textbf{constrained-pure at} $s_i \in S_i$  if  $\sigma_i$ puts maximal feasible mass on $s_i$:  
 for each $S_i' \in \mathcal{S}_i$,  \begin{align}
\sigma_i (S_i')  =\begin{cases}
1-\mu_i (S_i \setminus S_i')  &\mbox{ if } s_i \in S_i' \\ \mu_i (S_i')  & \mbox{ otherwise.}
\end{cases} \label{constrained-pure} \end{align}  It is straightforward to see that this strategy belongs to  $ \Sigma_i^\mu $.\footnote{To see that $\sigma_i$  is a probability measure, observe that for any $S_i' \subset S_i$, $$\sigma_i(S_i') = (1- \mu_i (S_i\setminus \{s_i\})) \delta_{s_i} (S_i') + \mu_i (S_i' \setminus \{s_i\}),$$ where $\delta_{s_i}$ denotes the Dirac measure at $s_i$.
This expression represents a nonnegative linear combination of the two nonnegative measures $\delta_{s_i} (\cdot)$ and $\mu_i (\cdot \setminus \{s_i\})$, and hence defines a nonnegative measure. Moreover, since $\sigma_i (S_i) = (1- \mu_i (S_i\setminus \{s_i\}))  + \mu_i (S_i \setminus \{s_i\}) =1$, it is a probability measure.}  We call any strategy that is constrained-pure at some pure strategy a {\bf constrained-pure} strategy.

The following result is central to our comparative statics analysis of perfect equilibria. It establishes the existence of the largest and smallest Nash equilibria of the constrained games and their comparative statics.  To do so, we consider a family of pseudo supermodular games  $\G (t)  = (I,  S, u (\cdot, t) )$ and their $\mu$-constrained games $\mG^\mu (t)  = (I, \Sigma^\mu, \bar u (\cdot, t))$. The set of all Nash equilibria of $\mG^\mu (t)$ is denoted by $\NE^\mu (t)$. 
\begin{proposition}
\label{lem:perfect_eq_exist_comp}  For a family of pseudo supermodular games  $\G (t) $ and for each $\mu \in \mathcal{M}$,  
\begin{description}
 \item[(i)]  $ \NE^\mu (t)$ contains  the largest and smallest elements,  which are constrained-pure;
 \item[(ii)]  if  each $u_i$ satisfies increasing-differences in $(s_i,t)$, then $\NE^\mu(t')\ge_{ws}\NE^\mu(t)$ for all $t' > t$. 
\end{description}
\end{proposition}

The logic for this result parallels that for mixed-strategy Nash equilibria  (\Cref{thm:mcs-mixed}), except that here we work with the constrained mixed-strategy space  $\Sigma^{\mu}$.
In this setting, the extremal best responses take the form of constrained-pure strategies.
Apart from considering these constrained-pure best responses, the reasoning proceeds analogously: we ``sandwich’’ the full best-response correspondence between the extremal constrained-pure best responses and apply our  fixed-point results (\Cref{cor:tarski} and \Cref{thm:mcs of fp})  to establish the existence and monotone comparative statics of the extremal equilibria.

We now establish our main results of the current section, namely,  the existence and comparative statics of perfect equilibria: 

\begin{theorem}\label{thm: main result for thpe} For a family of pseudo supermodular games $\G (t)$,
\begin{description}
        \item[(i)] $\G (t)$ has a perfect equilibrium in pure strategies; 
    \item[(ii)] the set of  perfect equilibria of $\G (t)$ is compact and contains maximal/minimal elements, which are all pure; 
    \item[(iii)] if each $u_i$  satisfies increasing differences in $(s_i, t)$,   perfect equilibria of $\G (t')$ weak-set dominate those of $\G (t)$ for all $t'> t$.      
\end{description}
\end{theorem}

We briefly sketch the proof; the full argument appears in the Appendix.
Existence follows directly from \Cref{lem:perfect_eq_exist_comp}(i), which guarantees that each constrained game $\mG^{\mu} (t)$ admits a constrained-pure Nash equilibrium.
For any sequence of full-support measures ${\mu^n}$ converging to zero, the corresponding constrained-pure equilibria  converge (along a subsequence) to a pure strategy profile that forms a perfect equilibrium.

Part (ii) also builds on \Cref{lem:perfect_eq_exist_comp}(i).
Any perfect equilibrium can be obtained as the limit of Nash equilibria of constrained games for some vanishing sequence ${\mu^n}$.
Each such equilibrium is sandwiched between the smallest and largest constrained-pure equilibria of $\mG^{\mu^n} (t)$, and their limits yield pure perfect equilibria that bound the given equilibrium.
We do not, however, claim the existence of globally smallest or largest perfect equilibria, since different sequences ${\mu^n}$ may generate incomparable limits.

The comparative statics result in part (iii) follows from \Cref{lem:perfect_eq_exist_comp}(ii).
The constrained-pure equilibria of perturbed game $\mG^{\mu^n} (t')$ weak-set dominate those of $\mG^{\mu^n} (t) $ for each $n$, and this comparison is preserved after taking limits.

The existence of perfect equilibria in general infinite games is also established by  \citet{simon1995equilibrium} under the full continuity of payoffs.
In contrast, our result ensures the existence of \emph{pure} perfect equilibria and, more importantly, delivers a monotone comparative statics  of perfect equilibria  absent in \citet{simon1995equilibrium}.

\section{Concluding Remarks}

This paper has revisited the order-theoretic foundations of monotone comparative statics. The existing theory has long relied on the assumption that the domain of choice forms a lattice. We have shown that this structural requirement is not essential in many settings. By introducing the weaker notion of a \textit{pseudo lattice}---which requires only the existence of minimal upper bounds and maximal lower bounds---we have generalized the core machinery of the theory, including the Monotonicity Theorem for individual choice and Tarski's fixed-point theorem.

The primary contribution of this generalization is its capacity to handle environments that fail the lattice property, most notably the space of probability distributions. This flexibility has allowed us to provide a unified framework for analyzing mixed-strategy Nash equilibria and, significantly, to conduct the first general monotone comparative statics analysis of (trembling-hand) perfect equilibria. By treating perfect equilibria as limits of Nash equilibria in constrained games---where strategy spaces are pseudo lattices but not lattices---we established the existence of pure perfect equilibria and their monotonicity with respect to the underlying environment.

Our framework opens several avenues for future research. First, while we have established the existence of maximal and minimal perfect equilibria, our current results do not guarantee the existence of a unique \textit{largest} or \textit{smallest} perfect equilibrium. Determining the conditions under which the set of perfect equilibria admits these global extremal elements remains an open question.

Second, the applicability of our framework to probability measures suggests natural extensions to \textit{Bayesian games}. Since the space of distributional strategies often lacks a lattice structure under standard orders, our pseudo-lattice approach could facilitate monotone comparative statics analysis in games of incomplete information.

Finally, our results may prove useful in the field of \textit{information design}. As illustrated in our examples, the set of information structures ordered by the mean-preserving spread (convex order) forms a complete pseudo lattice but generally fails to be a lattice. Applying our  optimization and fixed-point theorems to this domain could yield new insights into how optimal information structures respond to changes in the economic environment.

\if0
\fk{Here are some initial thoughts on what to write in this section.
\begin{enumerate}
    \item A brief summary of the results
    \item Some comments, beyond a brief summary, about our general contriburion, say emphasis on generality of non-latice structures. One possible drawback of writing this may be that it becomes repetitive.
    \item Future research directions--we need to think about how much/which to mention, some of them might be difficult and may not be a good idea if writing prompts referees to ask us to do them in revision. 
    \begin{enumerate}
        \item existence of largest/smallest perfect equilibrium. 
        \item Bayesian games.
        \item information design (use convex order)?
        \item Other possible applications.
    \end{enumerate}  
\end{enumerate}
}
 \fi

\newpage

\bibliographystyle{economet}
\bibliography{bibmatching}

\appendix 
 
\section{Proofs for \Cref{sec:prel}}

\begin{proof} [Proof of \Cref{thm:pseudo-lattice-chain-complete}] [{\bf ``$\Leftarrow$'' direction:}] For any nonempty set $S\subset X$, we will show that  $\Meet_X S \neq \emptyset$ (To prove $\Join_X S  \neq \emptyset$ is analogous and thus omitted). Let $\underline x$ be the smallest point of $X$, which exists by assumption. For $S$, their common lower bound, denoted by  $L_S$, is nonempty because $\underline x \in L_S.$  For any chain   $C\subset L_S$, there exists a supremum $z$ of  $C$ in $X$ since $X$ is chain complete, by assumption.  Then, $x \ge c$ for any $x\in S$ and  $c \in C$ since $C \subset L_S$. This and the fact that $z$ is the supremum of $C$ imply that $x \ge z$ for each $x\in S$.  Therefore, $z \in L_S$. Thus, by Zorn's lemma, the set $L_S$ has a maximal element, that is, $\Meet_X S\neq \emptyset,$ as desired.

 [{\bf ``$\Rightarrow$'' direction:}] Assume $X$ is a complete pseudo lattice.  Then, $\Join_X X$ is nonempty. Let $x \in \Join_X X$. Then,  since $\join_X X \subset U_X$, we have $x \ge x'$ for every $x' \in X$, which shows that $x$ is the largest element. To prove the  existence of the smallest element is 
analogous and thus omitted. \end{proof}

\begin{proof} [Proof of \Cref{lem:compact-chaincomp} ] We prove that every nonempty chain $C\subset X$ admits a supremum (and analogously an infimum).  To this end, fix any  nonempty chain $C\subset X$ and let $U:=\bigcap_{x\in C}U_x$ be the set of all upper bounds for $C$, where we recall $U_x=\{y\in X: y\ge x\}$.

\begin{claim} \label{claim:comp-chain1} $U\ne \emptyset$.  
\end{claim}
\begin{proof}  Suppose not.  Then, $\bigcap_{x\in C}U_x=\emptyset$.  Hence, $\bigcup_{x\in C} (X\setminus U_x)=X$. Recall, by the definition of natural topology,  $X\setminus U_x$ is open.  Since $X$ is compact, there exists a finite set $C_f\subset C$ such that 
\begin{equation} \label{eq:comp-chain1}
      \bigcup_{x\in C_f} (X\setminus U_x)=X. 
\end{equation}

Meanwhile, since $C_f$ is a finite chain, it admits a maximum $\bar x$. Since $X\setminus U_{\bar x}\supset X\setminus U_{ x}$ for all $x\in C_f$, we have 
\begin{equation} \label{eq:comp-chain2}
      \bigcup_{x\in C_f} (X\setminus U_x)=X\setminus U_{\bar x}. 
\end{equation}
It follows from \eqref{eq:comp-chain1} and \eqref{eq:comp-chain2} that $U_{\bar x}=\emptyset$, which, however, contradicts $\bar x\in U_{\bar x}$.
\end{proof}

For each $c\in C$ and $u\in U$, let $[c,u]:=\{x\in X:  c\le x\le u\}.$   \Cref{claim:comp-chain1} ensures that this set is well defined.

\begin{claim} \label{claim:comp-chain2}$\bigcap_{c\in C, u\in U }[c, u]\ne \emptyset$.
\end{claim}

\begin{proof} Suppose not.  Then, $X\setminus(\bigcap_{c\in C, u\in U }[c, u])=X$, so
$$\bigcup_{c\in C, u\in U }(X\setminus [c, u])=X.$$
Given our topology, $[c,u]$ is closed, so $X\setminus [c, u]$ is open for each $c\in C, u\in U$.  By the compactness of $X$, we have $(c^1,u^1), ..., (c^K,u^K)$ in $C\times U$, for some $K\in \mathbb{N}$, such that 
$$\bigcup_{k=1 }^K(X\setminus [c^k, u^k])=X,$$ 
so 
    $$\bigcap_{k=1 }^K [c^k, u^k]=\emptyset.$$

Meanwhile, since $C$ is a chain, there exists $\bar c:=\max\{c^1, ..., c^K\}$, so 
\begin{equation}
    \bigcap_{k=1 }^K [c^k, u^k]= \bigcap_{k=1 }^K [\bar c, u^k] =\emptyset.
\end{equation}
However, since $U$ consists of upper bounds of all $c\in C$, $\bar c\le u^k$ for all $k=1, ..., K$, a contradiction.
\end{proof}

To complete the proof, let $a\in \bigcap_{\tiny c\in C, u\in U } [c, u]$, which is possible by \Cref{claim:comp-chain2}.  Since $c\le a$ for all $c\in C$ and since $a\le u$ for all $u\in U$, we conclude that $a=\sup C$.  \end{proof}

\section{Proof for \Cref{sec:ind-choice}}\label{app:ind-choice}

\begin{proof} [Proof of \Cref{thm:MCS-UPSS}]   In the following, we prove the ``upper'' version of the result: the ``lower'' version is analogous and thus omitted.

\noindent [{\bf ``$\Leftarrow$'' direction:}] 
To show $M_{S'} (t') \ge_{uwpss} M_{S} (t)$ for every $t,t' \in T$ with $t' \ge t$ and $S, S' \subseteq X$ with  $S'\ge_{wpss} S$, consider any $x \in M_{S} (t)$ and  $x' \in M_{S'} (t')$.  Consider any   $z '\in (x \join x')\cap S'$ and $z\in (x \meet x')\cap S$; such $z$ and $z'$ exist  since $S'\ge_{wpss} S$.
We have
\begin{align*}
    u(x,t)\ge u(z,t) \Rightarrow u(z',t)\ge u(x',t)  \Rightarrow u(z',t') \ge u(x',t'),  
\end{align*}
where the first implication follows from upper pseudo quasi-supermodularity of $u$ and the second implication follows because $u$ satisfies  upper single-crossing and $z' \ge x'$. Because $x' \in M_{S'} (t')$ by assumption, it follows that $z'\in  M_{S'} (t')$, as desired.


\noindent  [{\bf ``$\Rightarrow$'' direction:}]  To prove that $u$ is upper single-crossing in $(x,t)$, consider any  $t' \ge t$ and $x' \ge  x$ such that $u(x',t)\ge u(x,t)$.  Choose $S=S'=\{x, x'\}$.  Clearly, $S'\ge_{wpss} S$, and $x'\in M_S(t)$. Fix any $x''\in M_{S'}(t') \subset \{x, x'\}$. That $M_{S'} (t') \ge_{uwpss} M_S (t)$ and $x' \join x'' =\{x'\}$ implies $x'\in M_{S'} (t')$, which in turn implies $u(x',t')\ge u (x,t')$, as desired.

Next, to prove $u$ is upper pseudo quasi-supermodular in $x$, let $S=\{x, z\}$ and $S'=\{ x', z'\}$, for arbitrary $z\in x\meet x'$ and $z'\in x\join x'$.  Suppose that $x$ and $x'$ are incomparable (since otherwise the result holds trivially). Assume that $u(x,t)\ge u(z,t)$.  Then, $x\in M_S(t)$. Since $S' \ge_{wpss} S$ and $S' \cap (x \join x') = \{z'\}$,   $M_{S'} (t) \ge_{uwpss} M_S (t)$ requires   $z' \in M_{S'}(t)$. It follows that $u(z',t)\ge u(x',t)$, proving the upper pseudo quasi-supermodularity of $u$ in $x$. \end{proof}

\begin{proof}[Proof of \Cref{thm:pseudo-argmax}]
This result follows immediately from  a directional version  in \Cref{thm:pseudo-argmax-upper} below. \end{proof} 

We say that $S\subset X$ is an  \emph{upper (resp. lower) pseudo sublattice} if, for any $x,x' \in S$, $x \join x' \subset S$ (resp. $x\meet x' \subset S$), and that $S$ is a \emph{complete  upper (resp. lower) pseudo sublattice} if, for every nonempty $S' \subseteq S$, $\Join_X S'$  (resp. $\Meet_X S'$) is a nonempty subset of $S$.

\begin{manualtheorem}
{4$'$} \label{thm:pseudo-argmax-upper} Assume $X$ is a pseudo lattice and $u: X \to \mathbb R$ is   upper (resp. lower) pseudo quasi-supermodular.
\begin{description}
    \item[(i)] $\arg\max_{x\in X} u (x)$ is  an upper (resp. lower) pseudo sublattice of $X$ whenever it is nonempty.
    \item [(ii)] In addition, if $X$ is a complete pseudo lattice and $u$ is order upper semicontinuous, then $\arg\max_{x\in X} u (x)$ is a nonempty, complete upper (resp. lower) pseudo sublattice, admitting the largest (resp. smallest) point.  
\end{description}  
\end{manualtheorem}
\begin{proof}  Throughout the proof, we only establish the upper case since the proof for the lower case is analogous.

 For (i), if $s,s' \in \arg\max_{x\in X} u (x)$, then  $u (s) \ge u (z)$ for any $z \in s\meet s'$, which implies by upper pseudo quasi-supermodularity that   $u (z')\ge  u (s')$ for any $z' \in s\join s'$, meaning $z' \in \arg\max_{x\in X} u (x)$, as desired. 

For (ii), let us establish  a couple of claims (see the Online Appendix for proof): 
\begin{claim} \label{claim:chain}
    For any subset $X' \subset X$ and each $x \in X'$, there is a maximal chain in $X'$  containing $x$.  
\end{claim}

\begin{claim} \label{claim:nonempty-join}
If $X$ is chain complete, then for any $x,y$ and $z$ with  $x,y \in U_z$,   $(x \meet y) \cap U_z $ is nonempty. Also, for any $x,y$ and $z$ with $x, y \in L_z$,  $(x \join y) \cap L_z$ is nonempty.
\end{claim}

To first prove  $\arg\max_{x\in X} u (x) \ne \emptyset$, let
\[
 M := \sup_{x \in X} u(x),
\]
(where $M$ is possibly infinite) and observe that there exists a sequence $\{x_n\}$ such that
\[
\lim_{k \to \infty} \sup_{n \ge k} u(x_n) = M .
\]
We construct an increasing sequence (hence a chain) $\{y_n\}$ satisfying $u(y_n) \ge u(x_n)$ for every $n$. Letting $y$ denote the supremum of $\{y_n\}$, the order upper semicontinuity of $u$ implies
\[
u(y)
\ge \lim_{k \to \infty} \sup_{n \ge k} u(y_n)
\ge \lim_{k \to \infty} \sup_{n \ge k} u(x_n)
=  M,
\]
which shows that $y \in \arg\max_{x \in X} u(x)$.

To construct $\{y_n\}$, we begin by defining  $z^0 = \{ z^0_n\}_{n \in \mathbb{N}}$  such that $z^0_n = x_n$. Given $z^{m-1} = \{z_{n}^{m-1}\}_{n \in \mathbb{N}}$,  we recursively construct, for each $m \ge 1$, a sequence $z^m = \{z_n^m\}_{n \in \mathbb{N}}$ satisfying:
\begin{itemize}
    \item[(a)] $z_n^m = z_n^{m-1}$ for all $n < m$;
    \item[(b)] $z_1^m \le z_2^m \le \cdots \le z_m^m \le z_n^m$ for all $n > m$;
    \item[(c)] $u(z_n^m) \ge u(x_n)$ for all $n \ge 1$.
\end{itemize}
Note that $z^0$ satisfies (a)--(c)  trivially.  

Once such sequences are constructed, it suffices to define $y_n := z_n^n$. Indeed, for every $n$, conditions  (a) and (b) imply that
$z_{n+1}^{n+1} \ge z_n^n$, so $\{y_n\}$ is increasing, and condition (c) implies that
$u(z_n^n) \ge u(x_n)$.

Fix $m \ge 1$ and suppose that $z^{m-1}$ satisfies (a)--(c).  To satisfy (a), define $z_n^m := z_n^{m-1}$ for all $n < m$. It remains to define $z_n^m$ for $n \ge m$. Let $w_0 := z_m^{m-1}$. For $k\ge 1$, given the pair $(w_{k-1}, z_{m+k}^{m-1})$, we inductively construct $(w_k, z_{m+k}^m)$ as follows, with the goal of defining $z_m^m$ as the limit of $\{w_k\}$:
\begin{description}[font=\normalfont]
     \item[Case I:] If there exists $x \in w_{k-1} \join z_{m+k}^{m-1}$ such that $u(x) \ge u(z_{m+k}^{m-1})$, then set
    \[
    w_k := w_{k-1}, 
    \qquad 
    z_{m+k}^m := x.
    \]
    \item[Case II:] If $u(x) < u(z_{m+k}^{m-1})$ for all $x \in w_{k-1} \join z_{m+k}^{m-1}$, then the upper pseudo quasi-supermodularity of $u$ implies that
    \[
    u(x) \ge u(w_{k-1}) 
    \quad \text{for all } x \in w_{k-1} \meet z_{m+k}^{m-1}.
    \]
    By \Cref{claim:nonempty-join}, there exists some
    \[
    x' \in (w_{k-1} \meet z_{m+k}^{m-1}) 
    \cap \{x : x \ge z_{m-1}^{m-1} = z_{m-1}^m\}.
    \]
    Define
    \[
    w_k := x',
    \qquad 
    z_{m+k}^m := z_{m+k}^{m-1}.
    \]
\end{description}
    In either case, for each $k \ge 1$, we have
\begin{gather}
\label{zmk}
z_{m-1}^m = z_{m-1}^{m-1} \le w_k \le w_{k-1}
\quad \text{and} \quad
w_k \le z_{m+k}^m,\\
\label{uwk}
u(w_k) \ge u(w_{k-1})
\quad \text{and} \quad
u(z_{m+k}^m) \ge u(z_{m+k}^{m-1}) \ge u(x_{m+k}).
\end{gather}

Since $\{w_k\}$ is a chain, it admits an infimum; define $z_m^m$ to be this infimum. By \eqref{zmk}, we have $z_m^m \le w_k \le z_{m+k}^m$ for all $k \ge 1$, so condition (b) holds for $z^m$. Moreover, by the order upper semicontinuity of $u$,
\[
u(z_m^m)
\ge \lim_{n \to \infty} \sup_{k \ge n} u(w_k)
\ge u(w_0)
= u(z_m^{m-1})
\ge u(x_m).
\]
Together with \eqref{uwk}, this shows that $z^m$ satisfies condition  (c), completing the induction.

We now  prove that $M_X (u)$ is a complete upper pseudo sublattice. Consider any $S \subset M_X (u)$ and any $\bar s \in \Join_X S$. We need to show $\bar s \in M_X (u)$, which will imply that $\bar s$ is a supremum of $S$ if $S$ is a chain (since $X$ is chain complete), so $M_X (u)$ is chain complete.
(That $\Meet_X S \subset M_X(u)$ follows from an analogous argument.)  Let $\hat M = M_X (u) \cap L_{\bar s}$.    Suppose for contradiction  that $\bar s \not\in  \hat M$.   
  By \Cref{claim:chain}, there is a collection $(C_x)_{x \in \hat M}$ such that  each $C_x$ is a maximal chain in $\hat M$ that contains $x$. Letting $z_x =\sup C_x$,  we have $z_x \in \hat  M$ by the order upper semicontinuity of $u$, which implies  $z_x < \bar s$.      Also, there must exist some  $x,y \in \hat M$ with  $z_x \ne z_y$ since otherwise we would have some $\bar x =\sup C_x, \forall x\in \hat M$  with $\bar x < \bar s$, which would imply $\bar x \in  U_{S}$  (since $S \subset \hat M$) and  contradict $\bar s \in \Join_X S$.    
Also,   we cannot have $z_x < z_y$ or $z_x > z_y$ since we could then  add $z_y$ to $C_x$ or $z_x$ to $C_y$ to form a larger chain, contradicting   the maximality of  $C_x$ or $C_y$, respectively.  
Thus, $z_x$ and $z_y$  must be  incomparable.  
By \Cref{claim:nonempty-join}, one can then find  some $\hat x \in  ( z_x \join z_y ) \cap L_{\bar s}$ with $\hat x > z_x$.  
Since $u (z_x) \ge u (x)$ for any $x \in  z_x \meet z_y$, the upper pseudo quasi supermodularity of $u$ implies $u (\hat x) \ge u (z_y)$ and thus $\hat x \in \hat  M$. This contradicts the maximality of $C_x$ since $C_x \cup \{ \hat x\}$ is a larger chain in $\hat M$. Thus, $\bar s \in \hat M$.  \end{proof}

\begin{proof}[Proof of \Cref{cor:random_choice}]
To begin, let us define
$\hat u: A \times \Delta (\Theta)\to \mathbb{R}$
by \begin{align}
    \label{hat-u} \hat u (a, \eta):=\int  u (a,\theta)  \eta (d \theta). 
\end{align}

First, $\hat u (\cdot, \eta)$  is  pseudo supermodular, and thus pseudo quasi-supermodular,  in $a$ since the pseudo supermodularity of $u$ is preserved under a convex combination.  By \Cref{lem:order_usc} in the Online Appendix,  $\hat u (\cdot,\eta)$ is order upper semicontinuous.  Thus, by \Cref{thm:pseudo-argmax}(ii), $\arg\max_{a \in A} \hat u (a, \eta) $ is a complete pseudo sublattice, admitting the largest and smallest points, $\overline a (\eta)$ and $\underline a (\eta)$.  Hence, any $a$ is not optimal if $a \not\le \overline{a}(\eta)$ or $a \not\ge \underline{a}(\eta)$. Letting $\overline{a} = \overline{a}(\eta)$ and $\underline{a} =\underline{a}(\eta)$, we show that $\delta_{\overline a (\eta)}$ and $\delta_{\underline a (\eta)}$ are the greatest and smallest elements in $\arg\max_{\tilde x \in X} \bar u (\tilde x, \eta)$, respectively.  Fix any  upward closed set $A' \subset A$ and any $x \in \arg\max_{\tilde x \in X} \bar  u (\tilde x, \eta)$.  Then,  
$\delta_{\underline a}(A') = x(A') = \delta_{\overline a}(A') = 1$ if $\underline a \in A'$; 
$\delta_{\underline a}(A') = 0 \le x(A') \le 1 = \delta_{\overline a}(A')$ if $\underline a \notin A'$ but $\overline a \in A'$; and 
$\delta_{\underline a}(A') = x(A') = \delta_{\overline a}(A') = 0$ if $\overline a\notin A'$. 
Thus, the desired conclusion follows.

Observe next that  for any $a'\ge a$ and $\eta'\ge^{sd} \eta$, 
\begin{align*}
 \hat u (a',\eta ')-\hat u(a,\eta') &=\int (u (a',\theta)- u(a, \theta))\eta'(d \theta)  \\
&\ge \int (u (a', \theta)- u (a, \theta)) \eta (d\theta)\\
&=   \hat u(a',\eta)-\hat u(a,\eta),
\end{align*} where the  inequality follows since   $\eta' \ge^{sd} \eta$ and
$u (a', \cdot)- u  (a, \cdot)$ is monotonic (due to the increasing differences property of $u$).  Thus, $\hat u$ satisfies single-crossing in $(a, \eta)$. By \Cref{cor:MCS-sublat}, $\arg\max_{a \in A} \hat u (a, \eta') \ge_{pss} \arg\max_{a \in A} \hat u (a,\eta)$, implying  that $\overline{a} (\eta)$ and $\underline{a} (\eta)$ are nondecreasing in $\eta$. Combined with the  observation that $\delta_{\overline a(\eta)}$ and $\delta_{\underline a(\eta)}$ are the largest and smallest elements of $M (\eta)$, this implies that $M (\eta') \ge_{ws} M(\eta)$. \end{proof}

\section{Proofs for \Cref{sec:FP}}

\begin{proof} [Proof of \Cref{thm:tarski-correspondence}]
Letting $X_F$ denote the set of fixed points of $F$, we first prove  $X_F  \ne \emptyset$.  To this end,  consider any point $x \in X_+ = \{x' \in X :  x'' \ge x'  \mbox{ for some } x'' \in F (x') \}.$\footnote{Note that $X_+$ is nonempty since it contains the smallest point of  $X$, which exists since $X$ is a complete pseudo lattice.}  By \Cref{claim:chain}, there is a maximal chain $C$ in $X_+$ that contains $x$. Letting $\bar x_C = \sup C $, we have $\bar y_C := \sup F (\bar x_C ) \ge \sup F (x') \ge x' $ for all $x' \in C$, which implies $\bar y_C \in U_C$ and thus $\bar y_C \ge \bar x_C$.  Then,  $F (\bar y_C )\ge_{pss} F (\bar x_C)$ and thus $y' \ge \bar y_C$ for some $y' \in F (\bar y_C)$ (since $\bar y_C \in F (\bar x_C)$). This implies $\bar y_C \in X_+$. If $\bar y_C > \bar x_C$, then $C \cup \{\bar y_C \}$ would be a larger chain in $X_+$ than $C$, contradicting the maximality of $C$. Thus, $\bar y_C =\bar x_C$, meaning $\bar x_C \in X_F$. 

To show that  $X_F$ is a complete pseudo lattice, we need to prove: (i) for  any $S \subset  X_F$, $\Join_{X_F} S \ne \emptyset$ and $\Meet_{X_F} S \ne \emptyset$; (ii)  $X_F$ is chain complete. 

For (i), we only prove  $\Join_{X_F} S \ne \emptyset$ (since proving $\Meet_{X_F} S \ne \emptyset$ is analogous).  Let $$ T= U_{S} \cap \{ x' \in X : x'\ge x'' \mbox{ for some } x''\in F (x') \}$$
and consider a maximal chain $C $ in $T$  (which is nonempty since it contains $\sup X$).  Letting $z: = \inf C$, we aim to show $z \in X_F$, which will imply  $z \in \Join_{X_F} S$ since, if there were any $ z' < z$ such that $z' \in U_S \cap X_F$, then  $C \cup \{ z' \}$ would form a larger chain in $T$ than $C$, a contradiction. Suppose now, for contradiction, that $z \not\in X_F$.  Observe that  $y:= \inf F (z) \le \inf F (x) \le x$ for all $x \in C$, implying $y \in L_C$ and thus $y \le z$.  We must have  $y < z$ since $z \not\in X_F$ and $y \in F (z)$.  Given this, we show below  that there is some $\tilde y  \in T$ with $\tilde y < z$, which will lead to the desired contradiction since $C \cup \{ \tilde y \}$ would be  a larger chain in $T$ than $C$.

By the well-ordering theorem, there exists an ordinal $\gamma$ such that $S = \{ s_\alpha \}_{\alpha < \gamma}.$ We construct a chain $\tilde C = \{ x_\alpha \}_{\alpha < \gamma}$ inductively, whose supremum gives the desired $\tilde y$.

\smallskip
\noindent
\emph{Initial step.}
Choose $x_1$ to be any element of $(y \join s_1) \cap L_z$, which is nonempty by \Cref{claim:nonempty-join} and the fact that $y, s_1 \in L_z$. Since $z \ge s_1$, $y \in F(z)$, and $s_1 \in F(s_1)$, the pSS monotonicity of $F$ implies $x_1 \in F(z),$
because $F(z) \ge_{pss} F(s_1)$.

\smallskip
\noindent
\emph{Inductive step.}
Let $\beta >1$ be any ordinal smaller than $\gamma$, and suppose that we have constructed an increasing chain
\[
(x_\alpha)_{\alpha < \beta} \subset L_z
\quad \text{with} \quad
x_\alpha \in F(z) \;\; \text{for all } \alpha < \beta.
\]
Define
\[
\tilde x := \sup_{\alpha < \beta} x_\alpha .
\]
This supremum exists and satisfies $\tilde x \in F(z)$ since $F(z)$ is a complete pseudo sublattice. As before, the set $(\tilde x \join s_\beta) \cap L_z$ is nonempty, and we choose any element of this set to be $x_\beta$.

Since $z \ge s_\beta$, $\tilde x \in F(z)$, and $s_\beta \in F(s_\beta)$, the pSS monotonicity of $F$ implies 
$x_\beta \in F(z),$
because $F(z) \ge_{pss} F(s_\beta)$. Adding $x_\beta$ to $(x_\alpha)_{\alpha < \beta}$ preserves the chain property.

\medskip
By construction, $\tilde C \subset F(z)$ and $x_\alpha \ge s_\alpha$ for all $\alpha < \gamma$. Since $\tilde C$ is a chain and $X$ is chain complete, there exists
\[
\tilde y := \sup \tilde C = \Join_X \tilde C .
\]
Moreover, $\tilde y \in F(z)$ because $F(z)$ is a complete pseudo sublattice. We also have $\tilde y \in U_S$, since $\tilde y \ge x_\alpha \ge s_\alpha$ for all $\alpha < \gamma$.

Finally, since $z \in U_{\tilde C}$ and $\tilde y = \sup \tilde C$, it follows that $\tilde y \le z$. In fact, $\tilde y < z$, because $z \notin F(z)$. By pSS monotonicity, this implies that there exists $\tilde z \in F(\tilde y)$ such that $\tilde z \le \tilde y$, since $\tilde y \in F(z)$ and $F(\tilde y) \le_{pss} F(z)$. Hence, $\tilde y \in T$, as desired.

For (ii), we prove only that any chain $C \subset X_F$ has a supremum in $X_F$, as the argument for the infimum is analogous. Let
\[
x_C := \sup_X C,
\]
which exists since $X$ is chain complete. Define a self-correspondence $G$ on $Y := U_{x_C}$ by
\[
G(x) := F(x) \cap Y \quad \text{for each } x \in Y .
\]

We first show that $G$ is nonempty-valued. It suffices to show that
\[
G(x_C) = F(x_C) \cap U_{x_C} \neq \emptyset .
\]
Indeed, once this holds, the pSS monotonicity of $F$ implies that for any $x \ge x_C$, there exist $x' \in F(x)$ and $x'' \in F(x_C)$ with $x'' \ge x_C$ such that, for some $\tilde x \in x' \join x''$, we have $\tilde x \in F(x)$. Since $\tilde x \ge x_C$, this implies $\tilde x \in F(x) \cap U_{x_C}$, and hence $G(x) \neq \emptyset$.

To show that $G(x_C) \neq \emptyset$, note that for each $x \in C$, we have $x \in F(x)$ and $F(x_C) \ge_{pss} F(x)$. It follows that $\sup F(x_C) \ge x$ for all $x \in C$. Therefore,
\[
\sup F(x_C) \in F(x_C) \cap U_C \subset F(x_C) \cap U_{x_C} = G(x_C).
\]

Observe next that for each $S \subset Y$, $\Meet_{Y} S  = (\Meet_X S) \cap Y$ and $\Join_Y S = (\Join_X S) \cap Y$. Using this observation, it is straightforward to see that $G$ is  complete-pseudo-sublattice-valued  and pSS monotonic.   Thus, letting $Y_G$ denote the set of fixed points of $G$  \jw{and applying the part (i) to $G$ and $Y_G$, it follows that $\Meet_{Y_G} Y_G \ne \emptyset$, so}  $Y_G$ contains  a smallest point, say $\underbar{$x$}_G$. Clearly, $\underbar{$x$}_G$ is also a fixed point of $F$ and corresponds to  $\sup_{X_F} C$ since any fixed point of $F$ weakly greater than $x_C$ is also a fixed point of $G$ and thus weakly greater than $\underbar{$x$}_G$.    \end{proof}

\begin{proof}[Proof of \Cref{thm:mcs of fp}]
Fix any $x \in \Fp (t)$ and let $Z = U_x$.  Define a self-correspondence on $Z$ as follows: for each $\tilde x\in Z$,  $H (\tilde x) = F (\tilde x, t') \cap Z $. Fix any $\tilde x\in Z$.  Since $x \in F (x, t)$ and $F(\tilde x, t') \ge_{pss} F(x,t') \ge_{uws} F (x, t) $, there is $y \in F (\tilde x, t')$ with $y \ge x$, so $H$ is nonempty-valued. It is straightforward to see that as $F (\tilde x, t')$ is a complete pseudo sublattice, so is $H (\tilde x) = F(\tilde x, t') \cap Z$.  It is also straightforward that $H$ is pSS monotonic on $Z$. Thus, by \Cref{thm:tarski-correspondence}, the set of fixed points of $H$, denoted $X_H$, is nonempty. Since $X_H \subset \Fp (t')$ and any $\tilde x\in  X_H$ is weakly greater than $x$, we have just proved that $\Fp (t') \ge_{uws}  \Fp (t)$. The proof for $\Fp (t') \ge_{lws}  \Fp (t)$ is analogous and hence omitted. \end{proof}

\begin{proof}[Proof of \Cref{prop:NE_existence_MCS}]
Denote   the (pure-strategy) best-response correspondence for player $i$ in game $\G (t)$ by  
\begin{align}
 	B_i(s_{-i},t):= \arg\max_{s_i\in S_i} u_i(s_i, s_{-i},t),\label{br-correspondence}
 	\end{align}
and let  $B(s,t)=\prod_{i \in I} B_i(s_{-i},t)$. Note that a strategy profile $s=(s_i)_{i \in I}$ is a (pure-strategy)  Nash equilibrium if and only if $s \in B(s,t)$. 

     To prove part (i), suppose that $\G(t)$ is pseudo quasi-supermodular.
Note first that $S_i$ is a complete pseudo lattice, and $u_i$ is order upper semi-continuous and pseudo quasi-supermodular in $s_i$ for each $i$ by assumption. So, by \Cref{thm:pseudo-argmax} (ii),  $B_i(s_{-i},t)$ is a nonempty, complete pseudo sublattice. Moreover, for each $s_{-i}, s_{-i}' \in S_{-i}$ with $s_{-i}' \ge s_{-i}$, because $u_i$ satisfies single-crossing in $(s_i,s_{-i})$  by assumption, by \Cref{cor:MCS-sublat}, $B_i(s_{-i}',t) \ge_{pss} B_i(s_{-i},t)$. Thus, it follows that $B(\cdot,t)$ has the property that $B(s',t) \ge_{pss} B(s,t)$ for every $s, s' \in S$ with $s' \ge s$, that is, $B(\cdot,t)$ is pSS monotonic (with respect to the product order on $S$). Therefore, it follows that $B$ 
is pseudo monotonic. Thus, by \Cref{thm:tarski-correspondence}, the set of fixed points of $B$ is a nonempty, complete pseudo lattice, and thus has a largest element. We complete the proof by observing that the set of fixed points of $B$ is equivalent to the set of pure Nash equilibria of $\G(t)$, $\Eq(t)$.

To prove part (ii), observe that since  $u_i$ satisfies single-crossing in $(s,t)$ for each $i \in I$ by assumption, by \Cref{cor:MCS-sublat}, we have $B (s,t') \ge_{pss} B(s,t)$ for each $s \in S$, which implies that $B (\cdot, t')$  weak-set  dominates $B (\cdot, t)$. So, by \Cref{thm:mcs of fp}, it follows that $\Eq(t') \ge_{ws} \Eq(t)$, as desired. 
\end{proof}

\begin{proof}[Proof of \Cref{thm:mcs-mixed}]
This result follows immediately from setting $\mu \equiv 0$ in \Cref{lem:perfect_eq_exist_comp} in \Cref{sec:wmcs of thpe}. 
\end{proof}

 \section{Proofs for \Cref{sec:wmcs of thpe}}
 \label{app:thpe}

First, we establish useful properties of the best response in the constrained games:  
\begin{lemma} \label{lem:MCS-constraind-mixed-alpha}  Consider a family of   pseudo supermodular games  $(I, S, u (\cdot, t))$. For any nonnegative measures $\mu = (\mu_i)_{i \in I} \in \mathcal{M}$, define the best response correspondence as
$$B_i^{\mu}(\sigma_{-i}; t)
:=\arg\max_{\sigma_i\in \Sigma_i^{\mu}} \int u_i(s_i, s_{-i},t) \sigma_i(ds_i)\sigma_{-i}(ds_{-i}).$$  Then,
\begin{itemize}
    \item[(i)] $B_i^{\mu}(\sigma_{-i}; t)$ has largest and smallest elements, $\sh_i^{\mu}(\sigma_{-i}; t)$ and  $\sl_i^{\mu}(\sigma_{-i}; t)$, which are both constrained-pure  and nondecreasing in $\sigma_{-i}$; 
    \item[(ii)] $\sh_i^\mu (\sigma_{-i}; t') \ge^{sd} \sh_i^\mu (\sigma_{-i}; t)$ and  $\sl_i^\mu (\sigma_{-i}; t') \ge^{sd} \sl_i^\mu (\sigma_{-i}; t)$ for any $t' \ge t$ if $u_i$ satisfies increasing differences in $(s_i,t)$. 
\end{itemize}  
\end{lemma}

\begin{proof}   Observe first that the unconstrained best response $B_i$, defined in \eqref{br_mixed_nash}, has the largest and smallest elements in pure strategies---denoted $\bh_i (\sigma_{-i};t)$ and $\bl_i (\sigma_{-i}; t)$, respectively---that are nondecreasing in $\sigma_{-i}$. This follows directly from \Cref{cor:random_choice} by  mapping the payoff function and best response in  \eqref{expected-utility-mixed} and \eqref{br_mixed_nash} to those in \eqref{expected-utility} and \eqref{expected-choice}, respectively, with  $a=s_i$ and $\theta = s_{-i}$ (so that   $x=\sigma_i$ and  $\eta=\sigma_{-i}$).   

Next, let   $\sh_i^{\mu}$ denote a strategy constrained-pure at $\bh_i:=\bh_i (\sigma_{-i};t)$: that is, for each $S_i' \subset S_i$,  
$$ \sh_i^{\mu} (S_i')= \begin{cases}
  1-\mu_i (S_i\setminus {S_i'})  &  \mbox{ if } \bh_i\in S_i' \\
   \mu_i (S_i')   & \mbox{ if } \bh_i\not\in S_i'. 
\end{cases} $$ 
Let us define for each $\sigma = (\sigma_i)_{i \in I}$ and $s_i \in S_i$, 
\begin{align*}
    \bar u_i (\sigma, t)  = \int_{s \in S} u_i (s, t) \sigma (ds) \; \mbox{ and } \;  \hat u_i (s_i, \sigma_{-i}, t)  = \int_{s_{-i} \in S_{-i}}   u_i (s_i, s_{-i},t) \sigma_{-i} (d s_{-i}).  
\end{align*} Then, for any $\sigma_i\in \Sigma_i^{\mu}$,
\begin{align*}
   & \bar u_i(\sh_i^{\mu}, \sigma_{-i},t)-\bar u_i(\sigma_i, \sigma_{-i},t)  \\ 
   = & \int \hat u_i(s_i, \sigma_{-i},t)\sh_i^{\mu}(d s_i)-\int \hat u_i(s_i, \sigma_{-i},t)\sigma_i(ds_i)\\
   = &  \hat u_i(\bh_i, \sigma_{-i},t)[\sh_i^{\mu}(\{\bh_i\})-\sigma_i(\{\bh_i\})]  - \int_{S_i\setminus \{\bh_i\}} \hat u_i(s_i, \sigma_{-i},t)[\sigma_i-\sh_i^{\mu}](ds_i)\\
   = & \int_{S_i\setminus \{\bh_i\}}[\hat u_i(\bh_i, \sigma_{-i},t)- \hat u_i(s_i, \sigma_{-i},t)][\sigma_i-\sh_i^{\mu}](ds_i)\\ 
    = & \int_{S_i\setminus \{\bh_i\}}[\hat u_i(\bh_i, \sigma_{-i},t)- \hat u_i(s_i, \sigma_{-i},t)][\sigma_i-\mu_i ](ds_i)\\ 
   \ge & 0.
\end{align*}
The third equality follows since
$\sh_i^{\mu}(\{\bh_i\})-\sigma_i(\{\bh_i\})=\int_{S_i\setminus \{\bh_i\}}[\sigma_i-\sh_i^{\mu}](ds_i)$ and the fourth equality follows from the construction of $\sh^{\mu}$.  Finally, the inequality follows from the fact that $\bh_i$ is a best reponse to $\sigma_{-i}$ and $\sigma_i\sqsupseteq \mu_i$ for any $\sigma_i\in \Sigma_i^{\mu}.$  Moreover, the inequality is strict if $[\sigma_i-\mu_i](S_i\setminus\{s_i: s_i\le \bh_i\})>0$.  This proves that $\sh_i^{\mu}\in B_i^{\mu}(\sigma_{-i}; t)$ and that any $\sigma_i\in \Sigma_i^{\mu}$ which  is not stochastically dominated by $\sh_i^{\mu}$ cannot be a best response.  We thus conclude that $\sh_i^{\mu}$ is the largest element of $B_i^{\mu}(\sigma_{-i}; t)$. Likewise,  a strategy constrained-pure at  $\bl_i (\sigma_{-i}; t)$, denoted as $\sl_i^\mu$, is the smallest point of $B_i^{\mu}(\sigma_{-i}; t)$. This completes the proof of part (i). 

To prove part (ii), observe first that as $u_i$ satisfies increasing differences in $(s_i,t)$, so does  $\hat u_i$ (since $\hat u_i$ is  a convex combination of $u_i$ with respect to $\sigma_{-i}$). Thus,    $\bh_i (\sigma_{-i}; t') \ge \bh_i (\sigma_{-i}; t)$ and $\bl_i (\sigma_{-i}; t') \ge \bl_i (\sigma_{-i}; t) $.   To prove  $\sh_i^{\mu}(\sigma_{-i};t')\ge^{sd} \sh_i^{\mu}(\sigma_{-i}; t)$, we need to show that $\sh_i^{\mu}(\sigma_{-i}; t') (S_i') \ge \sh_i^{\mu}(\sigma_{-i}; t) (S_i')$  for  any upward closed set $S_i' \subset S_i$.
There are two cases. Suppose first $\bh_i(\sigma_{-i}; t) \not\in S_i'$. Then, $\sh_i^{\mu}(\sigma_{-i}; t') (S_i') \ge  \mu_i (S_i' )= \sh_i^{\mu}(\sigma_{-i}; t) (S_i').$  Suppose next $\bh_i(\sigma_{-i}; t) \in S_i'$.  Then, $\bh_i(\sigma_{-i}; t') \in S_i'$  (since $S_i'$ is an upward closed set) so $\sh_i^{\mu}(\sigma_{-i}; t') (S_i') =\sh_i^{\mu}(\sigma_{-i}; t) (S_i') =  1- \mu_i (S_i\setminus S_i'),$ as desired.
Likewise, we have 
$\sl_i^{\mu}(\sigma_{-i}; t')\ge^{sd} \sl_i^{\mu}(\sigma_{-i}; t)$, completing our proof.  \end{proof}

\begin{proof}[Proof of \Cref{lem:perfect_eq_exist_comp}]  Throughout this proof, for any mapping $F$, let $X_F$ denote  its fixed-point set.

For part (i), consider the following self-maps on $\Sigma^\mu = \times_{i \in I} \Sigma_i^\mu$:
\begin{align}\label{fixed-map:F}
\overline F(\sigma) = (\sh_i^\mu (\sigma_{-i}; t))_{i \in I}, \quad
\underline F(\sigma) = (\sl_i^\mu (\sigma_{-i}; t))_{i \in I}, \quad
H(\sigma) = (B_i^\mu (\sigma_{-i}; t))_{i \in I},
\end{align} where $\sh_i^\mu$, $\sl_i^\mu$, and $B_i^\mu$ are as defined in \Cref{lem:MCS-constraind-mixed-alpha}.
 Note that any point in $X_{\underline F}$, $X_{\overline F}$, or $X_{H}$ is a Nash equilibrium of $\mG^\mu$. Both $\underline F$ and $\overline F$ are pseudo monotonic, as they are singleton-valued and $\sl_i^\mu (\cdot; t)$ and $\sh_i^\mu (\cdot; t)$ are weakly increasing by \Cref{lem:MCS-constraind-mixed-alpha}(i). By \Cref{thm:tarski-correspondence}, $X_{\underline F}$ and $X_{\overline F}$ each admit extremal points, which must be constrained-pure by construction. Let $\sl$ and $\sh$ denote the smallest and largest elements of $X_{\underline F}$ and $X_{\overline F}$, respectively. Moreover, by \Cref{lem:MCS-constraind-mixed-alpha}(i), $\sh_i^\mu (\sigma_{-i}; t) \ge^{sd} \sigma_i \ge^{sd} \sl_i^\mu (\sigma_{-i}; t)$ for any $\sigma_i \in B_i^\mu (\sigma_{-i};t)$, implying $\overline F (\sigma) \ge_{ws}   H(\sigma) \ge_{ws} \underline F (\sigma)$ at each $\sigma \in \Sigma^\mu$. Hence, by \Cref{thm:6prime}, $\sh \ge \sigma \ge \sl$ for every $\sigma \in X_{H} = \NE^\mu (t)$, as desired.

For part (ii), define $\overline F$ and $\underline{F}$ as in \eqref{fixed-map:F} and define \begin{align}\label{fixed-map:G}
\overline G(\sigma) = (\sh_i^\mu (\sigma_{-i}; t'))_{i \in I}, \quad
\underline G(\sigma) = (\sl_i^\mu (\sigma_{-i}; t'))_{i \in I}.
\end{align} By \Cref{lem:MCS-constraind-mixed-alpha}(ii),  $\overline{G}(\sigma) \ge_{ws} \overline{F}(\sigma)$ and $\underline{G}(\sigma) \ge_{ws} \underline{F}(\sigma)$ at each $\sigma \in \Sigma^\mu$. Then, by \Cref{thm:mcs of fp}, the largest  point of $X_{\overline{G}}$ is weakly greater than that of $X_{\overline{F}}$; the smallest point of $X_{\underline{G}}$ is weakly greater than that of $X_{\underline{F}}$. Given part (i), this implies $\NE^\mu(t') \ge_{ws} \NE^\mu (t)$.\end{proof}

\begin{proof}[Proof of \Cref{thm: main result for thpe}]
For parts (i) and (ii), we focus on the game $G (t)$.   Consider any sequence of constrained games $\{\mG^{\mu^n} (t)\}_n$ with  each $\mu^n$ belonging to $\M^0$ and $\| \mu^n \| \to 0$.    By  \Cref{lem:perfect_eq_exist_comp}(i), each constrained game $\mG^{\mu^n} (t)$ has a constrained-pure Nash equilibrium $\sigma^n$.  Then,  part (i)  follows from the next claim: \begin{claim} \label{claim:limit is pure}
A (sub)sequence of constrained-pure strategy profiles $\{\sigma^n\}_n$  must weakly converge to a pure strategy profile.  
\end{claim}
\begin{proof}
 Let $s_i^n$ denote  a pure strategy  on which $\sigma_i^n$ puts the maximum weight. By the compactness of $S_i$,   $\{s_i^n\}_n$ (or its subsequence) converges to some limit $\tilde s_i \in S_i$. To show that  $\sigma_i^n$ weakly converges to $\tilde s_i$  (i.e., the mixed strategy putting all the  weight on $\tilde s_i$), it suffices to show that for any bounded continuous function $f$ defined on $S_i$, \begin{align} \label{weak convergence}
        \int_{s_i \in S_i} (f (s_i) - f(\tilde s_i) ) d \sigma_i^n (s_i)  \to 0. 
    \end{align} To show this, for any $\epsilon > 0$, we can find $N$ such that for any $n > N$,  $| f (s_i^n) - f (\tilde s_i) | < \frac{\epsilon}{2}$ and $\mu_i^n (S_i) < \frac{\epsilon}{2 (M -m)}$, where $M =\sup_{s_i} f (s_i)$ and $m =\inf_{s_i} f (s_i)$.\footnote{We may assume without loss that $M > m$ since otherwise \eqref{weak convergence} holds trivially.} Note that the latter inequality implies $\sigma^n_i (S_i\setminus \{s_i^n\}) < \frac{\epsilon}{2 (M-m)}$. Observe now that 
    \begin{align*}
        \left|   \int_{s_i \in S_i} (f (s_i) - f(\tilde s_i) ) d \sigma_i^n (s_i) \right|
 & \le | f (s_i^n ) - f (\tilde s_i) | \sigma_i^n (\{s_i^n\})  +   \int_{s_i \in S_i \setminus \{ s_i^n\}} | f (s_i) - f(\tilde s_i) |  d \sigma_i^n (s_i)  \\  &  < \frac{\epsilon}{2} + (M -m)  \frac{\epsilon}{2(M-m)} =\epsilon,
    \end{align*} establishing the desired convergence. 
\end{proof}

 For part (ii), we first establish compactness. Since $\Sigma$ is a metrizable space, it suffices to prove sequential compactness.  
Consider a sequence of perfect equilibria $\{\sigma^n\}_n$. Because $\Sigma$ is compact, this sequence admits a convergent subsequence.  Denoting its limit by $\sigma$, we  show that $\sigma$ is a perfect equilibrium.

Since each $\sigma^n$ is a perfect equilibrium, there exists a sequence of constrained games $\{\mG^{\mu^{n,m}}(t)\}_m$ and an associated sequence of Nash equilibria $\{\sigma^{n,m}\}_m$ such that  $\| \mu^{n,m} \| \to 0$ and $\sigma^{n,m}\to \sigma^n$ as $m\to\infty$. For each $n$, choose $m(n)$ sufficiently large so that
\begin{align}
\label{pe-maximal-exist}
\max\bigl\{ \|\mu^{n,m(n)}\|,\; d(\sigma^n,\sigma^{n,m(n)}) \bigr\} < \frac{1}{n},
\end{align}
where $d(\cdot,\cdot)$ denotes the metric on $\Sigma$, which generates the weak topology.

For each $n$, define $\tilde\mu^n := \mu^{n,m(n)}$ and $\tilde\sigma^n := \sigma^{n,m(n)}$. Observe first that $\tilde\sigma^n$ is a Nash equilibrium of the constrained game $\mG^{\tilde\mu^n}(t)$. Moreover, by \eqref{pe-maximal-exist}, we have $\|\tilde\mu^n\|\to 0$ as $n\to\infty$, and $\tilde\sigma^n\to \sigma$. Indeed,
\[
d(\tilde\sigma^n,\sigma)= d(\sigma^{n,m(n)},\sigma)
\le d(\sigma^{n,m(n)},\sigma^n) + d(\sigma^n,\sigma),
\]
and both terms on the right-hand side converge to zero. This establishes that $\sigma$ is a perfect equilibrium.

To prove the existence of a maximal perfect equilibrium, note that by \Cref{lem:compact-chaincomp}, the set of perfect equilibria is chain complete since it is compact as shown above. Let $\sigma$ be any perfect equilibrium of $\G(t)$, whose existence follows from part (i). By \Cref{claim:chain}, there exists a maximal chain of perfect equilibria containing $\sigma$. Chain completeness then implies that this maximal chain admits a supremum, which must be a maximal perfect equilibrium.

 To prove that all maximal perfect equilibria are pure, it suffices to show that for any perfect equilibrium $\sigma$ of $\G(t)$, there exists a pure perfect equilibrium that weakly dominates $\sigma$ under $\ge^{sd}$. Consider a sequence of games $\{\mG^{\mu^n} (t)\}_n$ whose Nash equilibria $\{\sigma^n \}_n$ converge weakly to $\sigma$. By \Cref{lem:perfect_eq_exist_comp}(i), for each $\mG^{\mu^n}(t)$, there exists a constrained-pure strategy Nash equilibrium $\bar\sigma^n$ that dominates $\sigma^n$ under $\ge^{sd}$. By \Cref{claim:limit is pure}, $\{\bar\sigma^n \}_n$ (or its subsequence) converges to a pure strategy profile. We denote this limit by $\bar\sigma$, and it constitutes a perfect equilibrium.

To show that $\bar\sigma \ge^{sd} \sigma$, we use the following result from \cite{kamae1977stochastic}:
\begin{fact}
\label{fact:sd-dominance}
Let $\{P_n\}_n$ and $\{Q_n\}_n$ be sequences of probability measures on a Polish space that converge weakly to $P$ and $Q$, respectively. If $Q_n \ge^{sd} P_n$ for all $n$, then $Q \ge^{sd} P$.
\end{fact}
Since $\bar\sigma^n \ge^{sd} \sigma^n$ for all $n$, and $\bar\sigma$ and $\sigma$ are their respective weak limits, it follows that $\bar\sigma \ge^{sd} \sigma$, as desired. 

The proof for minimal perfect equilibria is analogous and is therefore omitted.

For part (iii), we only establish the upper weak-set dominance since the lower weak-set dominance can be established analogously.
Consider any   perfect equilibrium $\sigma$ for $\G (t) $ which is a weak  limit of  Nash equilibria $\{\sigma^n \}_n$ for $\{\mG^{\mu^n} (t) \}_n$. By \Cref{lem:perfect_eq_exist_comp}, there exists a constrained-pure strategy Nash equilibrium $\tilde \sigma^n$ for $\mG^{\mu^n} (t')$ that dominates $\sigma^n$ in $\ge^{sd}$.   Letting $\tilde \sigma$ denote a weak limit of $\{\tilde \sigma^n\}_n$, $\tilde \sigma $ is a perfect equilibrium of $\G (t')$  such that  $\tilde \sigma \ge^{sd} \sigma$ due to \Cref{fact:sd-dominance}. \end{proof}

\clearpage
\pagenumbering{arabic}
\renewcommand*{\thepage}{SA.\arabic{page}}
\renewcommand{\thelemma}{S\arabic{lemma}}
\setcounter{lemma}{0}
\begin{center}
    \huge{Supplementary Appendix for: \vskip 2pt Monotone Comparative Statics without Lattices}
\end{center}

\section{Omitted Results for \Cref{sec:prel} and 
\Cref{sec:ind-choice}}\label{sec:appendix-ind-choice}
\label{sec:appendix-ind-choice-omitted}

\begin{example}[Failure of Lattice Property under Convex Order]
\label{exa:failure_lattice}
Let $\Theta = \{\theta_1, \theta_2, \theta_3\}$, and let $\sigma = (\sigma_1, \sigma_2, \sigma_3) \in \Delta(\Theta)$ denote a belief.  
To simplify notation, we write $(\sigma_1, \sigma_2)$ for the belief  $(\sigma_1, \sigma_2, \sigma_3)$ with $\sigma_3 = 1 - \sigma_1 - \sigma_2 \ge 0$.  
Fix the prior belief at $\mu = (\tfrac{2}{5}, \tfrac{2}{5})$.  
The set 
\[
X_\mu = \{\, x \in X : \underline{x}_\mu \le^{cx} x \le^{cx} \overline{x}_\mu \,\}
\]
then includes, among others, the following measures:
\[
\begin{aligned}
P &= \tfrac{1}{2}\delta_{(\tfrac{3}{10},\,\tfrac{3}{10})}
     + \tfrac{1}{2}\delta_{(\tfrac{1}{2},\,\tfrac{1}{2})}, \\
Q &= \tfrac{1}{2}\delta_{(\tfrac{3}{10},\,\tfrac{1}{2})}
     + \tfrac{1}{2}\delta_{(\tfrac{1}{2},\,\tfrac{3}{10})}, \\
R &= \tfrac{1}{4}\big(\delta_{(\tfrac{1}{5},\,\tfrac{2}{5})}
     + \delta_{(\tfrac{2}{5},\,\tfrac{1}{5})}
     + \delta_{(\tfrac{3}{5},\,\tfrac{2}{5})}
     + \delta_{(\tfrac{2}{5},\,\tfrac{3}{5})}\big), \\
S &= \tfrac{5}{11}\,\delta_{(\tfrac{11}{20},\,\tfrac{11}{20})}
     + \tfrac{3}{11}\,\delta_{(\tfrac{11}{20},\,0)}
     + \tfrac{3}{11}\,\delta_{(0,\,\tfrac{11}{20})}.
\end{aligned}
\]
It is straightforward to verify that each of these measures is a mean-preserving spread of $\delta_\mu$.  
Following Proposition~4.5 of \citet{muller2006stochastic} (from which this example is adapted), one can check that:  
(i) both $R$ and $S$ are mean-preserving spreads of $P$ and $Q$;  
(ii) $R$ and $S$ are incomparable with each other; and  
(iii) there exists no $x \in X_\mu$ that is simultaneously a mean-preserving spread of $P$ and $Q$ and a mean-preserving contraction of $R$ and $S$.  
Hence, the least upper bound of $P$ and $Q$ does not exist under $\ge^{cx}$ (so $X_\mu$ is not a lattice).
\end{example}

\begin{proof}[Proof of \Cref{claim:chain}]
Fix any $X'\subset X$ and each $x\in X'$.  Consider the family $\pazocal C : = \{ C \subset X' : C \mbox{ is a chain with } x \in C   \}$,  and order the elements in $\pazocal C$ by the set inclusion order.   Take any chain of chains $\pazocal{D} = \{ C_\alpha   \}_{\alpha \in A} $  for totally ordered set $A$, that is, $ C_\alpha \in \pazocal C$, and  $ C_\alpha \subseteq  C_\beta$ or $ C_\beta \subseteq C_\alpha$  for all  $\alpha, \beta \in A $.  Set $C^{\sharp} = \midcup_{\alpha \in A} C_\alpha$, and observe that $C^\sharp \in \pazocal C$ and that it  is an upper bound  of $\pazocal D$ in $\pazocal C$.  Thus, by Zorn's lemma, $\pazocal C$ must contain a maximal element.
\end{proof}

\begin{proof}[Proof of \Cref{claim:nonempty-join}]  To prove the first statement, observe that by \Cref{claim:chain}, there is a maximal chain $ \hat C  $ in $S:=  U_z\cap L_{\{x,y\}}$ that contains $z$.  Letting  $\hat x =\sup \hat C$ by the chain-completeness, we have $\hat x \in U_z$.  To  argue that $\hat x \in x \meet y$, note first that since both $x$ and $y$ are upper bounds of $\hat C$,  $\hat x \le x$ and $\hat x \le y$, implying $\hat x \in S$. If $\hat x \not\in x \meet y$, then there exists $\bar x  > \hat x$ such that  $\bar x \in L_{\{x,y\}}$ so $\bar x \in S$.  Then,  $\hat C \cup \{\bar x\}$ would be  a larger chain, contradicting the maximality of $\hat C$.  The proof of the second statement is analogous and hence omitted. 
\end{proof}

\begin{lemma} \label{lem:order_usc}
For $\eta \in \Delta(\Theta)$, $\hat u(\cdot,\eta)$ is  order upper semicontinuous on $A$.
\end{lemma}

\begin{proof}
Let $C=\{a_\alpha\}_\alpha \subset A$ be any chain, and suppose that 
$a' \in (\Meet_A C)\cup(\Join_A C)$ exists (so $a'$ is either $\inf C$ or $\sup C$).

For each $\alpha$, define
\[
h_\alpha(\theta):=\sup_{\beta \ge \alpha} u(a_\beta,\theta).
\]
Then $(h_\alpha)_\alpha$ is pointwise decreasing in $\alpha$, and
\[
\inf_\alpha h_\alpha(\theta)
=
\inf_\alpha \sup_{\beta \ge \alpha} u(a_\beta,\theta)
\quad \text{for all } \theta.
\]

We obtain 
\begin{align}
\nonumber
\inf_\alpha \sup_{\beta \ge \alpha} \hat u(a_\beta,\eta)
&=
\inf_\alpha \sup_{\beta \ge \alpha}
\int u(a_\beta,\theta)\,\eta(d\theta) \\  \nonumber
&\le
\inf_\alpha 
\int \sup_{\beta \ge \alpha} u(a_\beta,\theta)\,\eta(d\theta) \\ \label{order_usc}
&=
\inf_\alpha 
\int h_\alpha(\theta)\,\eta(d\theta).
\end{align}

Since the family $(h_\alpha)_\alpha$ is bounded,
and  $(h_\alpha)_\alpha$ is pointwise decreasing, 
 the Monotone Convergence Theorem yields
\[
\inf_\alpha \int h_\alpha(\theta)\,\eta(d\theta)
=
\int \inf_\alpha h_\alpha(\theta)\,\eta(d\theta).
\]

Combining this with \eqref{order_usc},
\begin{align}
\label{order_usc-1}
\inf_\alpha \sup_{\beta \ge \alpha} \hat u(a_\beta,\eta)
\le
\int
\inf_\alpha \sup_{\beta \ge \alpha}
u(a_\beta,\theta)\,\eta(d\theta).\end{align}

By order upper semicontinuity of $u(\cdot,\theta)$ and the fact that 
$a'$ is $\inf C$ or $\sup C$,
\[
\inf_\alpha \sup_{\beta \ge \alpha}
u(a_\beta,\theta)
\le
u(a',\theta)
\quad \text{for all } \theta.
\]
Integrating both sides and using \eqref{order_usc-1} yield
\[
\inf_\alpha \sup_{\beta \ge \alpha} \hat u(a_\beta,\eta)
\le
\int u(a',\theta)\,\eta(d\theta)
=
\hat u(a',\eta).
\]

Thus $\hat u(\cdot,\eta)$ is order upper semicontinuous.
\end{proof}

\section{Omitted Proofs for \Cref{sec:FP}}

\subsection{Proofs for Directional Results}
\begin{proof}[Proof of \Cref{thm:5prime}]
For each $x \in X$, let $H(x)$ denote the largest element of $F(x)$, which exists since $F(x)$ is a complete upper pseudo sublattice. The upper pseudo monotonicity of $F$ implies that $H(x)$, as a singleton-valued correspondence,  is nondecreasing. Hence, by \Cref{cor:tarski}, the fixed-point set of $H$ is a nonempty complete pseudo lattice and therefore contains a largest element, denoted by $\bar x_H$.

To see that $\bar x_H$ is also the largest fixed point of $F$, let $F(\cdot,t)=F(\cdot)$ and $F(\cdot,t')=H(\cdot)$. Then $F(x,t') \ge_{ws} F(x,t)$ for all $x \in X$, which implies by \Cref{thm:6prime} that $\Fp(t') \ge_{ws} \Fp(t)$. It follows that $\bar x_H$ is the largest fixed point of $F$ as well.

The proof for the case in which the correspondence $F$ is lower pseudomonotonic is analogous and therefore omitted. 
\end{proof}

\begin{proof}[Proof of \Cref{thm:6prime}]
    We prove only the first statement, as the proof of the second statement is analogous.

Fix any $x \in \Fp(t)$ and let $Z := U_x$. Define two self-correspondences on $Z$. The correspondence $H$ is defined as in the proof of \Cref{thm:5prime}, while $G(\tilde x)=x$ for all $\tilde x \in Z$. Clearly, both $H$ and $G$ are pSS monotonic, and $H (\tilde x) \ge_{ws} G (\tilde x)$ for all $\tilde x \in Z$.

By \Cref{thm:mcs of fp}, the fixed-point set of $H$ weak-set dominates that of $G$, which implies that $H$ has a fixed point $x' \ge x$ (since $x$ is the unique fixed point of $G$). Observe that $x' \in \Fp(t')$, which establishes the desired result.  \end{proof}

\subsection{Omitted Proofs for Generalized Bertrand Games}
\label{app:generalized-bertrand}

First, we define directional versions of pseudo quasi-supermodular games. Specifically, 
we say that $\Gamma$ is an \textbf{upper (resp. lower) pseudo quasi-supermodular game} if, for all $i\in I$, condition (P1) of pseudo quasi-supermodular games holds and
\begin{enumerate}
    \item[(P2$''$)] $u_i$ is upper (resp. lower) pseudo quasi-supermodular in $s_i$ and satisfies upper (resp. lower) single-crossing in $(s_i,s_{-i})$.
\end{enumerate}
The following is a directional variant of \Cref{prop:NE_existence_MCS}.

\begin{manualprop}{2$'$}
 \label{prop:NE_existence_MCS-lower}  For a family of upper (resp. lower) pseudo quasi-supermodular   games  $\Gamma (t)$,
\begin{description} 
\item[(i)]  the set of (pure)  Nash equilibria $\Eq(t)$ is nonempty and has the largest (resp. smallest) element; 
\item[(ii)]  if $u_{i}$ satisfies upper (resp. lower) single-crossing in $(s_i,t)$ for all $i \in I$, then $\Eq(t') \ge_{uws} \Eq(t)$ (resp. $\Eq(t') \ge_{lws} \Eq(t)$) for all $t'>t$.
\end{description}
\end{manualprop}

\begin{proof}

First, we present a directional version of \Cref{cor:MCS-sublat}:

\begin{manualcorollary}{2$'$}
     \label{cor:MCS-sublat-upper}    If $u$ is upper (resp. lower) pseudo quasi-supermodular in $x$ and satisfies upper (resp. lower) single-crossing in $(x,t)$, then 
$M_{S'} (t') \ge_{upss} M_{S} (t)$ (resp. $M_{S'} (t') \ge_{lpss} M_{S} (t)$) for all $t' \ge t$ and $S'\ge_{pss} S$.
\end{manualcorollary}
\begin{proof}  For any  $x \in M_{S} (t)$ and $x' \in M_{S'} (t')$, choose  any $z \in x \meet x' $ and $z' \in x \join x'$. Since  $S'\ge_{pss} S$, $z\in S$ and $z'\in S'$.  Suppose that $x$ and $x'$ are incomparable (or else, the result would be trivial).  Letting $Z =\{ x, z \}$ and $Z' = \{x', z'\} $, we have $Z' \ge_{wpss} Z$, with $Z' \cap (x \join x') = \{z'\}$, and $Z \cap (x \meet x') =\{z\}$.  Since $M_{Z'} (t') \ge_{uwpss} M_{Z} (t) $ by \Cref{thm:MCS-UPSS} and since  $x \in  M_{Z} (t) $ and $x' \in M_{Z'} (t')$,  we must have   $z'\in M_{Z'} (t')$, which implies $z' \in M_{S'} (t')$. Since $z$ and $z'$ were chosen arbitrarily, we have  shown $M_{S'} (t') \ge_{upss} M_{S} (t)$.
\end{proof}

Next, we proceed to prove the ``upper'' versions of the statements of \Cref{prop:NE_existence_MCS-lower} (the proof for the ``lower'' version is symmetric and thus omitted).

     To prove part (i), suppose that $\G(t)$ is upper pseudo quasi-supermodular.
Note first that $S_i$ is a complete pseudo lattice, and $u_i$ is order upper semi-continuous and upper pseudo quasi-supermodular in $s_i$ for each $i$ by assumption. So, by \Cref{thm:pseudo-argmax-upper}  (ii),  $B_i(s_{-i},t)$ is a nonempty, complete upper pseudo sublattice. Moreover, for each $s_{-i}, s_{-i}' \in S_{-i}$ with $s_{-i}' \ge s_{-i}$, because $u_i$ satisfies upper single-crossing in $(s_i,s_{-i})$  by assumption, by \Cref{cor:MCS-sublat-upper}, $B_i(s_{-i}',t) \ge_{upss} B_i(s_{-i},t)$. Thus, it follows that $B(\cdot,t)$ has the property that $B(s',t) \ge_{upss} B(s,t)$ for every $s, s' \in S$ with $s' \ge s$, that is, $B(\cdot,t)$ is upper pSS monotonic (with respect to the product order on $S$). Therefore, it follows that $B$ 
is upper pseudo monotonic. Thus, by \Cref{thm:5prime}, the set of fixed points of $B$ is nonempty and admits a largest element. We complete the proof by observing that the set of fixed points of $B$ is equivalent to the set of pure Nash equilibria of $\G(t)$, $\Eq(t)$.

To prove part (ii), observe that since  $u_i$ satisfies upper single-crossing in $(s_i,t)$ for each $i \in I$ by assumption, by \Cref{cor:MCS-sublat-upper}, we have $B (s,t') \ge_{upss} B(s,t)$ for each $s \in S$, which implies that $B (s, t') \ge_{uws} B (s, t)$ for all $s$. So, by \Cref{thm:6prime}, it follows that $\Eq(t') \ge_{uws} \Eq(t)$, as desired. 
\if0
\fk{MEMO: Below is an old proof, to be erased once I complete the new proof.}
To prove part (i), suppose that $\G$ is upper pseudo quasi-supermodular (the proof for the case with lower pseudo quasi-supermodular games is symmetric and thus omitted).
Note first that $S_i$ is a complete pseudo lattice, and $u_i$ is order upper semi-continuous and upper pseudo quasi-supermodular for each $i$ by assumption. So, by \Cref{thm:pseudo-argmax} (ii),  $B_i(s_{-i})$ is a nonempty, complete upper pseudo sublattice. Moreover, for each $s_{-i}', s_{-i} \in S_{-i}$ with $s_{-i}' \ge s_{-i}$, because $u_i(\cdot,s_{-i}')$ upper single-crossing dominates $u_i(\cdot,s_{-i})$  by assumption, by \Cref{thm:MCS-PSS}, $B_i(s_{-i}') \ge_{upss} B_i(s_{-i})$. Thus, it follows that $B(s):=\prod_{i \in I} B_i(s_{-i})$ has the property that $B(s') \ge_{upss} B(s)$ for every $s', s \in S$ with $s' \ge s$, that is, $B(\cdot)$ is upss-monotonic (with respect to the product order on $S$). Therefore, it follows that $B$ 
is upper pseudo monotonic. Thus, by \Cref{thm:5prime} (ii), the set of fixed points of $B$ is nonempty and has a largest element. We complete the proof by noting that the set of fixed points of $B$ is equivalent to the set of Nash equilibria, $\Eq(t)$.

To prove part (ii), suppose that $\G$ and $\G'$ are upper pseudo quasi-supermodular (the proof for the case with lower pseudo quasi-supermodular games is symmetric and thus omitted).
Denote by $B'$ the best-response correspondence in game $\G'$. Because $v_i$ upper dominates $u_i$ for each $i \in I$, by \Cref{thm:MCS-PSS} we have $B'(s) \ge_{upss} B(s)$ for each $s \in S$, which implies that $B'$ uws-dominates $B$. So, by \Cref{thm:tarski-correspondence} (ii), it follows that $\Eq(\G') \ge_{uws} \Eq(\G)$, as desired. 
\fk{MEMO: Proof of part (iii) has been commented out. Double-check if we need any of this for bertrand, and then erase this comment.}
To prove part (iii), by the proof for part (i), $B$ 
is both upper and lower pseudo monotonic, i.e., pseudo monotonic. Thus, by \Cref{thm:tarski-correspondence} (i) and recalling that $\Eq(\G)$ is equivalent to the set of fixed points of $B$, it follows that $\Eq(\G)$ is a nonempty, complete pseudo lattice, as desired.

 To prove part (iv), define $\tilde v_J:=(\tilde v_i)_{i\in J}: S \to \mathbb{R}^{|J|}$ such that, for each $i\in J$, 
	$$\tilde v_i(s):=\max_{s'\in S_i} v_i(s_i', s_{-i}).$$
Due to the payoff monotonicity, $\tilde v_J(s)=(\tilde v_i(s))_{i\in J}$ is a weakly increasing function of $s$. Further, for any $s\in S$,   $\tilde v_J(s)\ge v_J(s)$, and for any $s'\in \Eq(\G')$, $\tilde v_J(s')=v_J(s')$.  We then conclude
$$v_J(\Eq(\G'))= \tilde v_J(\Eq(\G'))\ge_{uws} \tilde v_J(\Eq(\G))\ge_{uws} v_J(\Eq(\G)),$$
where the first inequality follows from part (ii) and the observation that  $\tilde v_J$ is a weakly increasing function of $s$.
\fi
\end{proof}


Now we are ready to analyze generalized Bertrand games. We begin by establishing that a generalized Bertrand game is in the class of a directional version of pseudo quasi-supermodular games.

\begin{lemma} \label{br-lws} 
A generalized Bertrand game is lower pseudo quasi-supermodular.
\end{lemma}
\begin{proof}   
First, 
because the strategy space for each player is finite and totally ordered, it is a complete pseudo lattice. Second, each payoff function is order upper semicontinuous in the player's own strategy because the strategy space is finite. Moreover, it also trivially follows that 
the payoff function of each player is lower pseudo quasi-supermodular because $p_i \vee \bar p_i=\bar p_i$ and $  p_i \wedge  \bar p_i= p_i,$ for any $p_i$ and $\bar p_i$ with $ p_i<\bar p_i$.

Recall that the payoff function of firm $i$ is her profit given $p_i,p_{-i},$ as defined by \eqref{eq:bertrand_profit}, and fix any  $p_{-i}< p_{-i}'$ and pick any $p_i$ and $\bar p_i$ with $ p_i<\bar p_i$. We shall show that 
\begin{align}
U_i( p_i,p_{-i}')
\ge  
U_i(\bar p_i,p_{-i}')
\Rightarrow   
U_i( p_i,p_{-i})
\ge U_i(\bar p_i,p_{-i}),
\end{align}
which is the lower single-crossing condition we are left with to complete the proof.

Assume first that  $D_i (p_i, p_{-i}) =0$.
Then, (D1) implies  $D_i (\bar p_i, p_{-i}) =0$, so $U_i(p_i,p_{-i})=0= U_i(\bar p_i,p_{-i})$, as desired.

Assume next that  $D_i (p_i, p_{-i}) > 0$ and thus $D_i (p_i,p_{-i}') >0$ by (D1).  
Assume 
\begin{align}
 U_i(p_i, p_{-i}')\ge U_i(\bar p_i, p_{-i}'). \label{optimality-1}
\end{align} One can  define $c (p_{-i})$ and $K (p_{-i})$ such that  \begin{align}
   C_i (q)   = q  c  (p_{-i}) + K (p_{-i}) \mbox{ for }  q \in   \{  D_i (\bar p_i, p_{-i}), D_i (p_i,p_{-i}) \}. \label{cost-1}
\end{align}  Define similarly $c (p_{-i}')$ and $K (p_{-i}')$ by replacing $p_{-i}$ in \eqref{cost-1} with $p_{-i}'$.  By the convexity of $C_i$,     we have $c (p_{-i}') \ge c (p_{-i})$.\footnote{To see this, let $q = D_i (p_i, p_{-i})$ and $\bar q = D_i (\bar p_i, p_{-i})$ while letting $q' = D_i (p_i, p_{-i}')$ and $\bar q' = D_i (\bar p_i, p_{-i}')$. Note that $ q \ge \bar q$ and $q' \ge  \bar q'$. If $q = \bar q$, then $c (p_{-i})$ can be chosen sufficiency small to satisfy $c (p_{-i}) \le c (p_{-i}')$. Also if $q' =\bar q'$, then $c (p_{-i}')$ can be chosen sufficiency large  to satisfy $c (p_{-i}) \le c (p_{-i}')$.  Suppose thus that $q > \bar q$ and $q' > \bar q'$.   Observe now that, by (D1), $q \le  q'$ and $\bar q \le \bar q'$. Given this, the convexity of $C_i$ implies $$c(p_{-i}) =\frac{C_i (q) -C_i (\bar q) }{q -\bar q} \le \frac{C_i (q') - C_i (\bar q')}{q' - \bar q'} = c (p_{-i}').$$}
  Observe that \eqref{optimality-1} can be rewritten as   \begin{align} 
\label{optimality-2}    (p_i - c (p_{-i}')) D_i (p_i, p_{-i}') \ge (\bar p_i -c (p_{-i}')) D_i (\bar p_i, p_{-i}'). 
\end{align}   We next argue that $p_i -c (p_{-i}') \ge 0$. This is immediate from \eqref{optimality-2} if $D_i (\bar p_i, p_{-i}') =0$  (recall  $D_i (p_i, p_{-i}') >0$). Suppose thus that $D_i (\bar p_i, p_{-i}') > 0$. If  $p_i - c (p_{-i}') < 0$, then \eqref{optimality-2} would imply   $$ \bar p_i - c (p_{-i}') \le  (p_i - c (p_{-i}')) \frac{D_i (p_i, p_{-i}')}{D_i (\bar p_i, p_{-i}')} \le p_i -c (p_{-i}') $$ since $\frac{D_i (p_i, p_{-i}')}{D_i(\bar p_i, p_{-i}')} \ge 1 $, which contradicts $\bar p_i > p_i$. Thus $\bar p_i  - c (p_{-i}') > p_i - c(p_{-i}') \ge 0$.  Using this and $c(p_{-i}') \ge c (p_{-i})$, we obtain \begin{align}
 \frac{p_i - c (p_{-i})}{\bar p_i - c (p_{-i})} \ge \frac{p_i - c (p_{-i}')}{\bar p_i - c (p_{-i}')} \ge \frac{D_i (\bar p_i, p_{-i}')}{D_i (p_i,p_{-i}')} \ge \frac{D_i (\bar p_i, p_{-i})}{D_i (p_i,p_{-i})}, 
\label{optimality-3}\end{align} where the second inequality follows from \eqref{optimality-2} while the last inequality from (D2).  It follows from \eqref{optimality-3} that $ (p_i - c (p_{-i})) D_i (p_i, p_{-i}) \ge (\bar p_i -c (p_{-i})) D_i (\bar p_i, p_{-i})$, which implies $U_i (p_i, p_{-i})\ge U_i (\bar p_i, p_{-i}) $, completing the proof. \end{proof}

Next, we proceed to study the comparative statics of generalized Bertrand games. 

 \begin{lemma} \label{bertrand-br-mcs} Suppose that a  family of generalized Bertrand games $\Gamma (t)$ satisfies (B1) and (B2). Then, for any $i \in I$,  $U_i$ satisfies lower single-crossing in $(p_i,t)$.
\begin{proof} Recall that the shift from $\G (t)$ to $\G (t')$ with $t'>t$ involves the two changes, (B1) and (B2).  It suffices to establish the result under each change separately. First,  consider the case in which (B2) holds, while the demand $D_i(p,t)$ is constant in $t$. Given (B2), we shall show that $U_i$ satisfies lower single-crossing in $(p_i,t)$: that is, for any $p_i' \ge p_i$ and $p_{-i}$, $U_i (p_i', p_{-i},t) - U_i (p_i,p_{-i},t) > 0$ implies $U_i (p_i', p_{-i},t') - U_i (p_i,p_{-i},t') > 0$. To show it, note that
\begin{align}
\nonumber & U_i (p_i', p_{-i},t) - U_i (p_i,p_{-i},t) > 0  \\ 
\nonumber \Leftrightarrow  \;  & p_i'D_i(p_i',p_{-i},t)-C_i(D_i(p_i',p_{-i},t),t)-[p_iD_i(p_i,p_{-i},t)-C_i(D_i(p_i,p_{-i},t),t)]>0 \\
 \label{eq:bertrand_cs} \Leftrightarrow  \;  & p_i'D_i(p_i',p_{-i},t)-p_iD_i(p_i,p_{-i},t)> C_i(D_i(p_i',p_{-i},t),t)-C_i(D_i(p_i,p_{-i},t),t).
\end{align}
Because $ C_i(q',t)-C_i(q,t) \le C_i(q',t')-C_i(q,t')$ for every $q'>q$ by assumption (B2) and $D_i(p_i',p_{-i},t) \le D_i(p_i,p_{-i},t)$ as $p_i' \ge p_i$,  we have 
\begin{align*}
    C_i(D_i(p_i',p_{-i},t),t)-C_i(D_i(p_i,p_{-i},t),t) \ge C_i(D_i(p_i',p_{-i}),t')-C_i(D_i(p_i,p_{-i}),t').
\end{align*}
Therefore, it follows  from \eqref{eq:bertrand_cs} that that 
\begin{align*}
    p_i'D_i(p_i',p_{-i},t)-p_iD_i(p_i,p_{-i},t)>  C_i(D_i(p_i',p_{-i},t),t')- C_i(D_i(p_i,p_{-i},t),t'),
\end{align*} which is equivalent to 
\begin{align*}
     U_i (p_i', p_{-i},t') -  U_i (p_i,p_{-i},t') > 0,
\end{align*}
as desired. 

Next, consider the case in which (B1) holds while $C_i(q,t)$ is constant in $t$. Showing the lower single-crossing of $U_i$ in $(p_i,t)$ for this case is analogous to the proof of \Cref{br-lws} and hence omitted.  
\end{proof}
\end{lemma}

\begin{proof}[Proof of \Cref{cor:general_bertrand}]  The existence of Nash equilibrium is a direct consequence of  \Cref{prop:NE_existence_MCS-lower} (i) and \Cref{br-lws}. The comparative statics result between the sets of equilibria in $\G (t)$ and $\G (t')$ follows directly from \Cref{prop:NE_existence_MCS-lower} (ii) and  \Cref{bertrand-br-mcs}.

To prove the comparative statics result on the firm profit, suppose that $\Gamma (t)$ and $\Gamma (t')$ satisfy (B1) and (B2$'$) and consider  any Nash equilibrium $p' =( p_i')_{i\in I}$ in $ \G (t')$. By \Cref{br-lws}, \Cref{bertrand-br-mcs} and \Cref{prop:NE_existence_MCS-lower}(ii), there exists an equilibrium $p^* \le  p'$ in $\G (t)$. Now, 
consider any firm $i$ with $c_i'= c_i$. 

First, suppose that $ p_i' <c_i$. Then, it follows that $D_i(p', t')=0$ because otherwise $U_i(p', t')=(p_i ' -c_i) D_i(p', t' )<0 \le ( \max P_i-c_i)  D_i(\max P_i, p_{-i}',t')=U_i(\max P_i,p_{-i}', t')$ because $\max P_i \ge c_i$ by assumption, contradicting the assumption that $p'$ is a Nash equilibrium in $\G (t')$.\footnote{We denote $ U_i(p, t' )=(p_i-c_i)  D_i(p, t')$ for each $i \in I$ and $p \in P$.} This implies that $U_i(p',t')=0$. Now, we will show that $U_i(p^*, t)=0$. To show this, suppose for contradiction that $U_i(p^*,t) \neq 0.$ Because $p^*$ is an equilibrium and $\max P_i \ge c_i$, this implies that $D_i(p^*,t)>0$, $p^*_i > c_i$, and $U_i(p^*,t)>0$. Hence, it follows that
\begin{align*}
0 & < U_i(p^*, t)\\
& = (p^*_i-c_i) D_i(p^*,t) \\
& \le (p^*_i-c_i) D_i(p^*, t') \\
& \le (p^*_i-c_i) D_i(p^*_i, p_{-i}', t') \\
& \le  U_i(p', t' ),
\end{align*}
where the first inequality is as established earlier, the equality is by the definition of $U_i$, the second inequality follows from condition (B2$'$) and $p^*_i >c_i$, the third inequality follows from $p^* \le p'$, condition (D1), and $p^*_i >c_i$, and the last inequality follows from the assumption that $p'$ is a Nash equilibrium in $\G (t')$. Thus we obtain $U_i(p', t')>0$, a contradiction. Thus we have shown that $U_i(p^*,t)=0= U_i(p', t')$.

Second, suppose that $p_i' \ge c_i$. Then, first note that, for any $p_i \in P_i$  with $p_i \ge c_i$, $U_i (p_i,p_{-i},t) = (p_i -c_i) D_i (p_i, p_{-i},t)$ is weakly increasing in $p_{-i}$. Define  $\Pi_i (p_{-i},t) := \max_{p_i\in P_i}  U_i (p_i, p_{-i},t)$, and define $\Pi_i (p_{-i},t')$ similarly. Note that  the above monotonicity of $U_i$  for $p_i \ge c_i$ implies $\Pi_i (\cdot,t)$ is weakly increasing and that for any $p_{-i}$  and $p_i \ge c_i$,  $(p_i -c_i) D_i (p_i, p_{-i},t') \ge (p_i -c_i) D_i (p_i, p_{-i},t)$ and thus $\Pi_i (p_{-i},t') \ge \Pi_i (p_{-i},t)$.
 Therefore, it follows that for each $i$ with $c_i' = c_i$,  $\Pi_i (p_{-i}^*, t) \le \Pi_i (p_{-i}',t)  \le \Pi_i  (p_{-i}', t')$, where the first inequality follows because $p^* \le p'$ and $\Pi_i(\cdot, t )$ is weakly increasing as established earlier, and the second inequality follows because $\Pi_i (p_{-i}, t') \ge \Pi_i (p_{-i},t)$ for all $p_{-i}$ as established earlier as well. 

The preceding two cases complete the proof. 
\end{proof}

Finally, we demonstrate that the class of our generalized Bertrand games indeed subsumes pure Bertrand games as special cases.

\begin{lemma} \label{lem:pure-bertrand} A pure  Bertrand game is a generalized Bertrand game. 
\end{lemma}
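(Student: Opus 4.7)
The plan is to verify directly that the pure Bertrand game satisfies conditions \ref{itm:D1} and \ref{itm:D2}. For notational convenience, for any price profile $p$, write $m_{-i}(p_{-i}) := \min_{j \ne i} p_j$, so that $D_i(p_i, p_{-i}) = 1/|\{j \in I : p_j = \min_k p_k\}|$ when $p_i \le m_{-i}(p_{-i})$ and $D_i(p_i, p_{-i}) = 0$ otherwise.

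For \ref{itm:D1}, I would first check monotonicity in $p_i$: holding $p_{-i}$ fixed and raising $p_i$, one of three things happens at each threshold crossing---either $p_i$ stays strictly below $m_{-i}$ (so $D_i$ stays at $1$), $p_i$ enters a tie at the minimum (so $D_i$ drops from $1$ to $1/n$ for some $n \ge 2$), or $p_i$ moves strictly above $m_{-i}$ (so $D_i$ drops to $0$). In all cases $D_i$ weakly decreases. Monotonicity in $p_{-i}$ is symmetric: raising some $p_j$ can only weakly shrink the set of minimizers containing $i$, which weakly increases $1/n$ (and can turn $0$ into a positive value if $p_i$ was strictly above the old minimum but reaches the new one), so $D_i$ weakly increases.

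For \ref{itm:D2}, fix $p_i < p_i'$ and $p_{-i} \le p_{-i}'$ with $D_i(p_i, p_{-i}) > 0$, and set $m = m_{-i}(p_{-i})$, $m' = m_{-i}(p_{-i}')$, noting $m \le m'$ by the componentwise monotonicity of the $\min$. The hypothesis gives $p_i \le m$, and since $D_i(p_i, p_{-i}') \ge D_i(p_i, p_{-i}) > 0$ by \ref{itm:D1}, the denominator on the right of \eqref{elasticity} is also strictly positive, so the ratio there is well defined. I would then split into two cases:
\begin{itemize}
\item If $p_i' > m$, then $D_i(p_i', p_{-i}) = 0$, so the left-hand side of \eqref{elasticity} equals $0$ and the inequality is trivial.
\item If $p_i' \le m$ (in particular $p_i < m$ or $p_i = p_i' = m$, the latter being excluded by $p_i < p_i'$, so $p_i < m$), then $D_i(p_i, p_{-i}) = D_i(p_i, p_{-i}') = 1$, and the inequality reduces to showing $D_i(p_i', p_{-i}) \le D_i(p_i', p_{-i}')$, which is exactly the monotonicity in $p_{-i}$ already established in \ref{itm:D1}.
\end{itemize}
These two cases exhaust the possibilities, so \eqref{elasticity} holds.

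The verification is almost entirely bookkeeping; the only step that requires a moment's care is observing that $D_i(p_i, p_{-i}) > 0$ together with $p_i < p_i'$ forces $p_i < m$ whenever $p_i' \le m$, so that the denominators on both sides of \eqref{elasticity} collapse to $1$ and the condition reduces cleanly to the monotonicity in $p_{-i}$. I do not anticipate any serious obstacle beyond this case split.
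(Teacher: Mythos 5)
Your proposal is correct and follows essentially the same route as the paper: condition \ref{itm:D1} is a direct check, and for \ref{itm:D2} the argument splits into the case where the left-hand numerator $D_i(p_i',p_{-i})$ vanishes (so the inequality is trivial) and the case where $p_i$ is strictly below the rivals' minimum, in which both denominators equal $1$ and \eqref{elasticity} collapses to the monotonicity of $D_i$ in $p_{-i}$. The only cosmetic difference is that you organize the case split on $p_i'$ versus $m$ while the paper splits on whether $p_i = p_{-i}^m$ or $p_i < p_{-i}^m$; the two partitions cover the same ground.
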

\begin{proof} 
	It suffices to check  (D2) since it is straightforward to check   (D1).  Fix any  $p_i<p_i', p_{-i}<p_{-i}'$ such that $D_i(p_i,p_{-i})>0$.  It must be that 
$p_i\le p_{-i}^m:=\min_{j\ne i}p_j$.   There are two cases.

Consider first $p_i=p_{-i}^m$.  Then, $D_i(p_i', p_{-i})=0$ and $D_i(p_i, p_{-i}')>0$.  Hence, 
$$\frac{D_i(p_i', p_{-i})}{D_i(p_i, p_{-i})}=0\le \frac{D_i(p_i', p_{-i}')}{D_i(p_i, p_{-i}')}.$$

Consider next $p_i<p_{-i}^m$, so $D_i(p_i, p_{-i})=1$.  By (D1), $D_i(p_i, p_{-i}')=1$.
  Hence,
$$\frac{D_i(p_i', p_{-i})}{D_i(p_i, p_{-i})}\le \frac{D_i(p_i', p_{-i}')}{D_i(p_i, p_{-i}')} \Leftrightarrow D_i(p_i', p_{-i})\le D_i(p_i', p_{-i}').$$ The latter inequality is  a direct consequence of  (D1). 
\end{proof}

\section{Omitted Proof for \Cref{sec:wmcs of thpe}}

\begin{proof}[Proof of \Cref{lem:perfect is nash}]   We first establish  the following claim that shows the Hausdorff distance between $\Sigma_i$ and $\Sigma_i^{\mu_i^n}$ goes to zero as $n \to \infty$: 
\begin{claim} \label{claim:perturbed-distance}
    Fix any $\epsilon >0$. For sufficiently large $n$, 
   $  \sup_{\sigma_i ' \in \Sigma_i} \inf_{\tilde \sigma_i  \in \Sigma_i^{\mu_i^n}} d ( \tilde \sigma_i, \sigma_i' ) < \epsilon$.    
\end{claim}
\begin{proof}
    Given any $\sigma_i' \in \Sigma_i$,  let 
$\tilde \sigma_i  =  (1-\mu^n_i (S_i) )  \sigma_i'  +    \mu^n_i.  $ Clearly, $\tilde \sigma_i\in \Sigma_i^{\mu_i^n}$. Also, 
\begin{align*}
     d ( \tilde \sigma_i, \sigma_i'  )  =   \sup_{S_i'\in \mathcal S_i} | - \mu^n_i (S_i)    \sigma_i' (S_i') +    \mu^n_i (S_i') | \le \sup_{S'\in \mathcal S_i}    \mu^n_i (S_i)     \sigma_i' (S_i')  +    \mu^n_i (S_i') \le 2 \mu_i^n (S_i).
\end{align*}  
The proof is complete by choosing $n$ sufficiently large so that $\mu_i^n (S_i) < \frac{\epsilon}{2}$. 
\end{proof}
Consider a perfect equilibrium $\sigma$ of game $\mG=(I, \Sigma, \bar u)$.  Thus, there exists a sequence  $\sigma^n $ of Nash equilibrium of $\mG^{\mu^n}$ that weakly converges to $\sigma$.  Fix  any  $i\in I$ and $\sigma_i' \in \Sigma_i$ and  use  \Cref{claim:perturbed-distance} to  find a sequence   $\tilde \sigma_i^n \in \Sigma_i^{\mu_i^n}$ such that $ d(\tilde  \sigma_i^n , \sigma_i' ) \to 0$ so $\tilde \sigma^n_i$ weakly converges to $\sigma_i'$. Since $\sigma^n $ is a Nash equilibrium of $\mG^{\mu^n}$, we have $\bar u_i (\sigma_i^n, \sigma_{-i}^n ) \ge \bar u_i (\tilde \sigma_i^n, \sigma_{-i}^n)$.   Since $u_i$ is continuous, $\bar u_i (\sigma_i, \sigma_{-i}) \ge \bar u_i (\sigma_i', \sigma_{-i})$ by the weak convergence of $\sigma^n$ and $\tilde \sigma_i^n$ to $\sigma$ and $\sigma_i'$, respectively.\footnote{Note that since  $\Sigma_i$ is endowed  with the weak topology,  $\bar u_i (\cdot)$ remains continuous.} Thus, $\sigma$ is a Nash equilibrium of $\mG$ as desired. \end{proof}

\begin{lemma} \label{lem:constrained game st space}
If $S_i$ is a complete pseudo lattice,  so is $\Sigma_i^\mu$. 
\end{lemma}  
\begin{proof}
Thanks to \Cref{lem:compact-chaincomp} and \Cref{cor:compact-with-ext}, it suffices to show that $\Sigma_i^{\mu}$ is a compact set with the largest and smallest elements and that the stochastic dominance order on $\Sigma_i^\mu$ is closed under  the weak topology on $\Sigma_i^\mu$.  

To first show the compactness, we prove that $\Sigma_i^\mu$ is closed. Consider any sequence $(\sigma_i^n)_n $ in $\Sigma_i^\mu$ so that $\sigma_i^n (S') \ge \mu_i (S'),\forall n, \forall  S' \in \mathcal S_i$. Letting $\r'$ denote  the weak limit of the sequence (which must exist since $\Sigma_i$ is compact under the weak topology), we must show $\sigma_i' (S') \ge \mu_i (S'), \forall S' \in \mathcal S_i$. To show this, observe first that for any closed set $\tilde S \subset S_i$, 
\begin{align}
    \label{comp-closed} \sigma_i' (\tilde S) \ge  \lim\sup_{n} \sigma_i^n (\tilde S) \ge \mu_i (\tilde S)
\end{align}  (by the weak convergence of $(\sigma_i^n)_n$ to $\sigma_i'$   and Portmanteau theorem).   Observe next that 
 every Borel measure $\sigma_i'$ on Polish space  $S_i$  is regular so that   for any  $S' \in\mathcal S_i$, \begin{align}
    \label{measure-approx-compact} \sigma_i' (S') = \sup \{ \sigma_i' (\tilde S) : \tilde  S \subset S' \mbox{ and } \tilde S \mbox{ is compact}  \}.
\end{align} Since every compact set $\tilde S$ is closed, we have $\sigma_i' (\tilde S)  \ge \mu_i (\tilde S)$ by \eqref{comp-closed}, which implies  by \eqref{measure-approx-compact} that for any  $S' \subset \mathcal S_i$,  $\sigma_i' (S') \ge \mu_i (S') $.

The existence of smallest point $\underbar{$\sigma$}_i^{\mu}$ is proved by construction as follows: 
letting $\underbar{$s$}_i$ denote the smallest element of $S_i$ (which exists since $S_i$ is a complete pseudo  lattice),  
 $$\underbar{$\sigma$}_i^{\mu}(S_i'):=\begin{cases}
   1-\mu_i(S_i\setminus S_i')  &  \mbox{ if } \underbar{$s$}_i \in S_i' \\
   \mu_i (S_i')   & \mbox{ otherwise}.  
\end{cases} $$   To show that $\sigma_i \ge^{sd} \underbar{$\sigma$}_i^{\mu}$ for any $\sigma_i \in \Sigma_i^\mu$,  consider  any upward closed set $S_i' \in \mathcal{S}_i$. If $\underbar{$s$}_i \in S_i' $, then  $S_i' = S_i$ and thus $\underbar{$\sigma$}_i^{\mu}(S_i') =1  =\sigma_i (S_i')$.  Otherwise,  $\underbar{$\sigma$}_i^{\mu}(S_i')  = \mu_i (S_i') \le \sigma_i (S_i')$, as desired. The largest point obtains analoguously.  

Lastly, the closedness of the stochastic dominance order follows immediately from  \Cref{fact:sd-dominance}.
\end{proof}

\end{document}